\newcommand{\mc}{\mathcal}
\newcommand{\mb}{\mathbb}
\newcommand{\E}{\mathbb{E}}
\newcommand{\ti}{\tilde}
\newcommand{\D}{\mathrm{Dom}}
\newcommand{\I}{\textbf{1}_}
\newcommand{\eps}{\varepsilon}
\theoremstyle{definition}
\newtheorem{defn}{Definition}[section]
\newtheorem{rmk}[defn]{Remark}
\theoremstyle{plain}
\newtheorem{lem}[defn]{Lemma}
\newtheorem{thm}[defn]{Theorem}
\newtheorem{ass}[defn]{Assumption}
\numberwithin{equation}{section}
\DeclareMathOperator*\esssup{ess \,sup}
\begin{document}
\begin{frontmatter}

\title{Constrained NonSmooth Utility Maximization on the Positive Real Line}  
\runtitle{Nonsmooth Utility Maximization with Constraints}


\author{\fnms{Nicholas} 
\snm{Westray}\ead[label=e1]{westray@math.hu-berlin.de}
\thanksref{t1}}
\thankstext{t1}{This research was funded by an EPSRC DTA Grant}
\address{Nicholas Westray\\ Department of Mathematics\\ Humboldt Universit\"{a}t Berlin \\ Unter den Linden 6, 10099 Berlin \\ Germany\\
\printead{e1}}

\and
\author{\fnms{Harry} \snm{Zheng}\ead[label=e2]{h.zheng@imperial.ac.uk}}
\address{Harry Zheng\\ Department of Mathematics\\ 
Imperial College \\London  SW7 2AZ \\ UK \\ \printead{e2}}

\affiliation{Humboldt Universit\"{a}t and Imperial College}

\runauthor{Nicholas Westray and Harry Zheng}

\begin{keyword}[class=AMS]
\kwd[Primary ]{93E20}
\kwd{49J52}
\kwd[; secondary ]{60H30}
\end{keyword}

\begin{keyword}
\kwd{Nonsmooth utility maximization}
\kwd{Convex duality}
\kwd{Cone constraints}
\kwd{Random endowment}
\end{keyword}
\begin{abstract}
We maximize the expected utility of terminal wealth in an incomplete market where there are cone constraints on the investor's portfolio process
and the utility function  is not assumed
to be strictly concave or differentiable. 
We establish the existence of the optimal solutions to the primal and dual
problems and their dual relationship.
We simplify the present proofs in this area and extend 
the existing duality theory to the constrained nonsmooth setting. 
\end{abstract}
\end{frontmatter}
\section{Introduction}
Utility maximization is a classical theme in mathematical finance
and there is already a substantial body of literature devoted to the study of the problem in both complete and incomplete semimartingale models. We refer the reader to Karatzas and
\v{Z}itkovi\'{c} \cite{KZ03} as well as Kramkov and Schachermayer \cite{KS99} for an excellent overview of research to date.
The purpose of the present article is to extend the existing duality theory to the situation where there are cone constraints on the investor's portfolio as well as a utility function which is neither smooth nor strictly concave. To set  the context for this paper we first review previous work in the area which is of immediate interest.  

Cvitani\'{c}, Schachermayer and Wang \cite{CSW01} solve the utility maximization problem with a bounded random endowment. 
They prove that the usual duality relations hold but to achieve this it is necessary to enlarge the dual domain from $L^1(\mb{P})$ to $L^\infty(\mb{P})^*$, the topological dual of $L^\infty(\mb{P})$. In \cite{KZ03} these ideas are extended further to include intertemporal consumption. More recently Hugonnier and Kramkov \cite{HK04},  using some elegant techniques from convex analysis, generalize the results of \cite{CSW01} to the case of unbounded random endowment.

There has been some work on applying duality theorems in utility maximization.
Bellini and Frittelli \cite{BF02} as well as Biagini and Frittelli \cite{BF05}
show that a version of the Fenchel duality theorem can be combined
with a characterization of conjugate functionals in $L^\infty(\mb{P}) ^*$ due to Rockafellar \cite{R71} to efficiently establish the existence of a dual solution as well as the equality of the value functions. 

The introduction of constraints on the investor's portfolio in a general semimartingale
model is relatively new. Mnif and Pham \cite{MP01}
as well as Pham \cite{P02} provide a solution to the problem when the
underlying market is modelled by a continuous semimartingale with positive definite quadratic variation matrix.

A standard assumption in almost all papers is that the utility function be strictly concave and continuously differentiable. In the present article we are interested in advancing the general theory and so want to consider the situation where this may not be the case. Cvitani\'{c} \cite{C98} first
addresses this when considering a framework similar to  \cite{P02} but where the loss function is neither strictly concave nor differentiable. He derives solutions using subdifferential calculus together with convex analysis. The first nonsmooth utility maximization problem appears in Deelstra, Pham and Touzi \cite{DPT01} (see also Bouchard \cite{B02} as well as Bouchard, Touzi and Zeghal \cite{BTZ04}). Their solution uses the
quadratic inf convolution method which, whilst mathematically very satisfying, leads to lengthy and involved proofs.

This article contributes in several ways to the existing
literature. Firstly we incorporate the distinct features of \cite{BF02,CSW01,DPT01}
into a single model and extend the setting further by allowing for cone constraints on the portfolio. We use a technique due to Kramkov and Schachermayer \cite{KS03} (see also \cite{B02} as well as Westray and Zheng \cite{WZ08}) to prove directly the existence of a solution to the primal problem and thus remove the need for quadratic inf convolution, simplifying the existing proofs in this area. Secondly we apply a new result from Czichowsky, Westray and Zheng \cite{CWZ08}, in conjunction with the constrained optional
decomposition theorem of F\"{o}llmer and Kramkov \cite{FK97}, to show that the restrictive assumptions on the underlying asset in \cite{MP01,P02} are redundant. Finally we use a version of the Fenchel duality theorem due to Rockafellar \cite{R66}, different from that in \cite{BF02,BF05}, to give
a simple proof of the existence of a dual solution.
 
This paper is organised as follows. Section \ref{chp_Uhalfline_model} introduces the model formulation. Section \ref{Linfinity*}
discusses the dual problem and provides the essential results on constrained super replication. Section \ref{chp_Uhalfline_proof} contains the main result, Theorem \ref{chp_Uhalfline_mainthm},
together with its proof.


\section{Model Formulation}
\label{chp_Uhalfline_model}
The setup is the standard one in mathematical finance. There is a finite time horizon $T$ and a market consisting of one bond, assumed constant, and $d$ stocks, $S^1,\ldots,S^d$ modelled by  a $(0,\infty)^d$-valued, semimartingale on a filtered probability space 
$(\Omega,\mc{F},(\mc{F}_t)_{0\leq t\leq T},\mb{P})$, satisfying the usual conditions. We also assume, for simplicity, that the initial $\sigma$-field
$\mc{F}_0$ is trivial.  We write $X$ for the process $(X_t)_{0\leq t\leq T}$ and ``for
all $t$" implicitly meaning ``for all $t\in[0,T]$". For a predictable $S$-integrable
process, we use $H\cdot S$ to denote the stochastic integral with respect to $S$ and refer the reader to Jacod and Shiryaev \cite{JS03} and Protter \cite{Pr05} for further details. 

We want to define those investment strategies which are admissible. In
the current setting there are constraints, modelled by the set $\mc{K}\subset
\mb{R}^d$. 
\begin{ass}
\label{chp_Uhalfline_Kassump}
$\mc{K}$ is a closed convex cone such that there exist
$m\in\mb{N}$ and $k_1,\ldots,k_m$ in $\mc{K}$ with 
\begin{equation*}
\mc{K}=\left\{\sum_{i=1}^m \mu_i k_i:\mu_i\geq0\right\}.
\end{equation*}
\end{ass}

\noindent This assumption states that $\mc{K}$ is a polyhedral convex cone, see Rockafellar \cite{R70} Theorem 19.1. This class of sets contains some interesting examples, including no short selling of the first $m$ assets, $\mc{K}=\mb{R}^m_+\times\mb{R}^{d-m}$.

It is known that to prevent arbitrage we must exclude some trading strategies such as doubling. We define $\mc{H}$, the set of \textit{admissible} trading strategies, as follows.
\begin{align*}
\mc{H}:=\left\{H:H \text{ predictable and $S$-integrable}, H_t\in\mc{K} \, \text{ $\mb{P}$-a.s.
for all } t \right.
\\\left. \text{ and there exists } c\in\mb{R}_+ \text{ with } (H\cdot S)_t\geq
-c \text{ for all } t \right\}.
\end{align*}

Next we introduce the cone of random variables which can be dominated (super
replicated) by terminal wealths obtained from admissible strategies.
\begin{equation*}
\mc{R}:=\left\{R:R\leq(H\cdot S)_T \text{ for some } H\in\mc{H}\right\}.
\end{equation*}
Since $0\in\mc{K}$ it follows that $L^0_{-}(\mb{P})\subset\mc{R}$.
We define  
\begin{equation*}
\mc{C}:=\mc{R}\cap L^{\infty}(\mb{P}).
\end{equation*}
The set $\mc{C}$ contains all those random variables which are bounded and super-replicable.
Our agent has preferences modelled by a 
utility function $U$, increasing, concave and satisfying 
\begin{equation*}
\mathrm{int}\big(\D(U)\big)=\{x\in\mb{R}:x>0\},
\end{equation*}
where $\mathrm{int}\big(\D(U)\big)$ is the interior of the domain of $U$. To avoid any ambiguity we set $U(x)=-\infty$ for $x<0$. Observe that we do not insist that $U$ be strictly concave or differentiable. We  make the following assumption, the nonsmooth analogue of the Inada conditions.
\begin{ass}
\label{chp_Uhalfline_UProps} 
\begin{equation*}
\inf\bigcup_{x\in\mb{R}_+}\partial U(x) = 0,\quad \sup\bigcup_{x\in\mb{R}_+}\partial
U(x)=\infty,
\end{equation*}
\end{ass}
\noindent where $\partial U(x)$ denotes the subdifferential (subgradient) set of $U$ at $x$, i.e., 
$\partial U(x):=\{\xi\in \mb{R}:
U(y)\leq U(x)+\xi(y-x),\;\forall y\in \mb{R}\}$.

The agent starts with an initial capital $x$, may choose strategies from $\mc{H}$,
and aims to maximize the expected utility of terminal wealth subject to a random endowment $B\in L^{\infty}(\mb{P})$ with $b:=\|B\|_{L^\infty(\mb{P})}$. This leads to the following formulation
of the primal maximization problem.  
\begin{equation}
\label{primalproblem}
u(x):=\sup_{R\in\mc{R}_0(x)}\E\big[U(x+R-B)\big],
\end{equation}
where $\mc{R}_0(x)$ is the set containing all those $R\in\mc{R}$ for which the above expectation is well defined for a given $x$.
\begin{rmk}
\label{chp_Uhalfline_expfinite}
Let us expand slightly on this final point. The expectation in \eqref{primalproblem}
will be well defined if and only if both $\E\big[U(x+R-B)^+\big]$ and $\E\big[U(x+R-B)^-\big]$ are not equal to $+\infty$. Since $U(x)=-\infty$ for $x<0$ and a priori $\mc{R}$
may contain random variables which take negative values or
large positive values  with positive probability, it may happen that this
expectation fails to be well defined. We restrict to $\mc{R}_0(x)$
to ensure that this situation does not arise.
\end{rmk} 
We write $\ti{U}$ for the \textit{conjugate} (or \textit{dual}) of $U$ defined by 
\begin{equation*}
\tilde{U}(y)=\sup_{x\in\mb{R}_+}\big\{U(x)-xy\big\}.
\end{equation*} 
This is a convex and decreasing function with $\D(\tilde{U})\cap(-\infty,0)=\emptyset$.
From \cite{KS99} it is known that to guarantee the existence of an optimal solution we must impose a condition on the asymptotic elasticity 
of the utility function $U$. In \cite{DPT01} the authors show that, for
a nonsmooth utility function, these should be put on the dual function. Define
\begin{equation}
\label{defnAEofU}
\mathrm{AE}(\tilde{U}):=\limsup_{y\to 0}\sup_{q\in\partial \tilde{U}(y)}\frac{|q|y}{\tilde{U}(y)}.
\end{equation}
We shall need the following.
\begin{ass}
\label{chp_Uhalfline_AEofU}
$\mathrm{AE}(\tilde{U})<\infty$.
\end{ass}


\section{Dual Domain and Dual Characterization of $\mc{R}$} 
\label{Linfinity*}
With the primal problem formulated and the dual of the utility function 
introduced, we move to consider the domain of the dual problem. In
our setting there is a bounded random endowment  so we shall follow Biagini, Frittelli and Grasselli \cite{BFG08} as well as \cite{CSW01}
and formulate the dual domain as a subset of $L^\infty(\mb{P})^*$. 

We first provide an introduction to the relevant theory of the topological
dual of $L^\infty(\mb{P})$, for further details see Hewitt and Stromberg \cite{HS65} as well as Hewitt and Yosida \cite{YH52}. 
We write $ba(\mb{P})$ for the set of bounded, finitely additive measures, absolutely continuous with respect to $\mb{P}$ and $ba_+(\mb{P})$ for the nonnegative elements of $ba(\mb{P})$. We shall indirectly use the following important decomposition theorem.
\begin{thm}[Yosida and Hewitt \cite{YH52} Theorem 1.23]
\label{YHthm}
If $\nu\geq0$ is in $ba_+(\mb{P})$ then there exist unique $\nu_c\geq
0,\nu_f\geq 0$, both in $ba_+(\mb{P})$, such that $\nu_c$ is countably additive, $\nu_f$ purely finitely
additive and $\nu=\nu_c+\nu_f$. 
\end{thm}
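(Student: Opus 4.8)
The plan is to realise $\nu_c$ as the \emph{largest} countably additive measure dominated setwise by $\nu$, and then simply set $\nu_f := \nu - \nu_c$. I would first record two elementary facts about the vector lattice $ba(\mb{P})$: a finite supremum of countably additive elements is again countably additive (write $\mu \vee \lambda = \lambda + (\mu-\lambda)^+$ and use the Jordan decomposition of the countably additive signed measure $\mu - \lambda$), and any $\rho \in ba_+(\mb{P})$ dominated by a countably additive measure is itself countably additive (test both sides on a sequence $A_k \downarrow \emptyset$). Recall also that, by definition, $\nu_f$ is purely finitely additive precisely when the only countably additive $\mu$ with $0 \le \mu \le \nu_f$ is $\mu = 0$; and that $0\le\nu_c,\nu_f\le\nu$ will force $\nu_c,\nu_f\in ba_+(\mb{P})$ automatically.

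For existence, set $\mc{M} := \{\mu \in ba_+(\mb{P}) : \mu \le \nu,\ \mu\ \text{countably additive}\}$ and $\alpha := \sup_{\mu \in \mc{M}} \mu(\Omega) \le \nu(\Omega) < \infty$. I would choose $\mu_n \in \mc{M}$ with $\mu_n(\Omega) \to \alpha$ and pass to the increasing sequence $\lambda_n := \mu_1 \vee \dots \vee \mu_n$ in $\mc{M}$, which still satisfies $\lambda_n(\Omega) \uparrow \alpha$. Then $\nu_c(A) := \sup_n \lambda_n(A)$ is well defined (since $\lambda_n(A) \le \nu(A)$), finitely additive, with $0 \le \nu_c \le \nu$ and $\nu_c(\Omega) = \alpha$. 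To obtain countable additivity, fix $\eps > 0$, pick $n$ with $\lambda_n(\Omega) > \alpha - \eps$ so that $0 \le \nu_c - \lambda_n$ has total mass $< \eps$, and for $A_k \downarrow \emptyset$ estimate $\nu_c(A_k) \le \lambda_n(A_k) + \eps$, which tends to $\eps$ because $\lambda_n$ is countably additive; letting $\eps \downarrow 0$ gives continuity from above at $\emptyset$, so $\nu_c$ is the maximal element of $\mc{M}$. Finally $\nu_f := \nu - \nu_c \ge 0$ is purely finitely additive: if $0 \le \mu \le \nu_f$ is countably additive, then $\nu_c + \mu$ is countably additive and $\le \nu$, whence $\alpha = \nu_c(\Omega) \le \nu_c(\Omega) + \mu(\Omega) \le \alpha$ and $\mu = 0$.

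For uniqueness, let $\nu = \nu_c' + \nu_f'$ be a second such decomposition. For any $\mu \in \mc{M}$, the lattice element $(\mu - \nu_c')^+ = \mu \vee \nu_c' - \nu_c'$ is dominated both by $\mu$ — hence countably additive — and by $\nu - \nu_c' = \nu_f'$ — hence zero, since $\nu_f'$ is purely finitely additive. Thus $\mu \le \nu_c'$ for every $\mu \in \mc{M}$, so $\nu_c'$ is also the maximal element of $\mc{M}$; therefore $\nu_c' = \nu_c$ and $\nu_f' = \nu_f$.

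The step I expect to demand the most care is the passage, in the existence part, from the setwise increasing limit of countably additive measures to a genuinely countably additive measure: a set-by-set argument does not obviously yield continuity from above at $\emptyset$, and the device is to control the whole approximation through the scalars $\lambda_n(\Omega) \uparrow \alpha$, which converts the tail estimate into a uniform one. The lattice structure of $ba(\mb{P})$ — used both to form the $\lambda_n$ and to run the uniqueness argument — is the other ingredient one should have in place before starting.
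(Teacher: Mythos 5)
The paper does not prove this statement at all---it is quoted as an external result of Yosida and Hewitt and used as a black box---so there is no in-paper argument to compare against; your proof stands or falls on its own. It stands: what you give is essentially the classical Yosida--Hewitt argument (maximal countably additive minorant via the total-mass supremum $\alpha$, uniform tail estimate $\nu_c-\lambda_n\le\eps$ to get countable additivity of the increasing limit, pure finite additivity of the remainder and uniqueness read off from maximality), and the two preliminary lattice facts you isolate are exactly the ones needed. The only assertion I would ask you to write out is the claim, at the end of the existence paragraph, that $\nu_c$ is the \emph{greatest} element of your set $\mc{M}$ and not merely one of maximal total mass: for any $\mu\in\mc{M}$ one has $\mu\vee\nu_c\in\mc{M}$, hence $(\mu\vee\nu_c-\nu_c)(\Omega)\le\alpha-\alpha=0$, and since $\mu\vee\nu_c-\nu_c\ge0$ this forces $\mu\le\nu_c$; alternatively, maximality follows from your own uniqueness computation with $(\mu-\nu_c')^+$ applied to the decomposition just constructed, since you have already shown $\nu_f$ to be purely finitely additive. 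With that one line supplied, the existence and uniqueness arguments are complete and correct. (A cosmetic remark: your $\mc{M}$ collides with the paper's dual domain $\mc{M}$; rename it if this is to sit alongside the paper's text.)
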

One can develop a theory for the integration of bounded random variables with respect to finitely additive measures, see \cite{HS65} for details. It then follows that each $\nu\in ba(\mb{P})$ induces a linear functional $\psi_{\nu}:L^\infty(\mb{P})\to\mb{R}$ defined by 
\begin{equation*}
\psi_{\nu}(G)=\int_\Omega G d\nu.
\end{equation*}
It is shown in \cite{HS65} Theorem 20.35 that the mapping
which takes $\nu$ to $\psi_\nu$ is an isometric isomorphism between
$ba(\mb{P})$ and $L^\infty(\mb{P})^*$. We may thus identify the set of
bounded finitely additive measures with the topological dual of $L^\infty(\mb{P})$,
i.e., $L^\infty(\mb{P})^*\cong ba(\mb{P})$. Furthermore $L^1(\mb{P})$ is isomorphic to the subspace of $ba(\mb{P})$ containing all countably additive measures, i.e., 
\begin{equation*}
L^1(\mb{P})\cong \{\nu\in ba(\mb{P}): \nu_f=0\}
\end{equation*}
and so we may view $L^1(\mb{P})$ as a subspace of $ba(\mb{P})$. This leads
to the following useful expression for a random variable $G\in L^\infty(\mb{P})$
and a countably additive element $\nu\in ba_+(\mb{P})$,
\begin{equation}
\label{defnofpsinu}
\psi_{\nu}(G)=\int_\Omega G \frac{d\nu}{d\mb{P}}d\mb{P}=\E\left[\frac{d\nu}{d\mb{P}}G\right].
\end{equation}

With the necessary preliminaries covered let us introduce the dual domain,
\begin{equation*}
\mc{M}:=\{\nu\in ba(\mb{P}):\psi_{\nu}(G)\leq0 \text{ for all } G\in\mc{C}\}.
\end{equation*}
We first collect some important observations about $\mc{M}$. Since
$L^{\infty}_-(\mb{P})\subset \mc{C}$, $\mc{M}$ is a cone contained
in $ ba_+(\mb{P})$.
In fact $\mc{M}=(\mc{C})^0$, the polar of $\mc{C}$ with respect to the dual
system $\big(L^\infty(\mb{P}),ba(\mb{P})\big)$, see Heuser \cite{He82} for background on the theory of dual systems. Note that unlike \cite{CSW01} it is
not assumed that all the elements in $\mc{M}$ have norm 1.  
The set of countably additive elements of $\mc{M}$ is defined by
\begin{equation*}
\mc{M}^c:=\{\nu\in \mc{M}:\nu_f=0\},
\end{equation*}
which one could think of as ``$\mc{M}\cap L^1(\mb{P})$'', equivalently all
those measures in $\mc{M}$ which have a Radon-Nikodym derivative.

In the context of this article it is necessary to extend the definition of
$\psi_{\nu}$, when $\nu\in\mc{M}$, to elements $X\in L^0(\mb{P})$ which are $\mb{P}$-a.s. bounded below. Set 
\begin{equation*}
\psi_{\nu}(X):=\lim_{n\to\infty}\psi_{\nu}(X\wedge n).
\end{equation*}
In particular, using the monotone convergence theorem, for $\nu\in\mc{M}^c$
and $X\in L^0(\mb{P})$ bounded below
\begin{equation*}
\psi_{\nu}(X):=\lim_{n\to\infty}\E\left[\frac{d\nu}{d\mb{P}}(X\wedge n)\right]
=\E\left[\frac{d\nu}{d\mb{P}}X\right].
\end{equation*}
Thus relation (\ref{defnofpsinu}) continues to hold for the extension.
We shall need the following.
\begin{ass}
\label{chp_Uhalfline_FGE}
There exists  $\nu^1\in\mc{M}$  with 
\begin{equation*}
\E\left[\ti{U}\left(\frac{d\nu_c^1}{d\mb{P}}\right)\right]<\infty.
\end{equation*}
\end{ass}
\begin{rmk}
The above is to ensure that the dual problem is finite for all $x>0$. A well known consequence of Assumption \ref{chp_Uhalfline_AEofU}
is that 
\begin{equation}
\label{dualfinite}
\E\left[\ti{U}\left(r\frac{d\nu^1_c}{d\mb{P}}\right)\right]<\infty \text{
for all } r>0.
\end{equation}
We shall return to this result later. 
\end{rmk}
There is a subset of $\mc{M}$ which
plays a key role in defining those $x$ for which the primal problem is finite,
\begin{equation}
\label{Msup=M}
\mc{M}^{\mathrm{sup}}:=\left\{\nu\in\mc{M}^c:\nu>0 \text{ and } \nu(\Omega)=1\right\}.
\end{equation}
Using an identical proof to that of  \cite{BF05} Proposition 6 it is possible to show that when $\nu$ is countably additive and $\nu(\Omega)=1$, 
$\psi_\nu(G)\leq 0$ for all $G\in \mc{C}$
 if and only if $H\cdot S$ is a $\nu$-supermartingale for all $H\in\mc{H}$. 
Therefore we have
\begin{gather*}
\nonumber
\mc{M}^{\mathrm{sup}}=\{\nu\in ba_+(\mb{P}) : \nu \text{ is a probability measure equivalent
to $\mb{P}$ and }\\
\text{ $H\cdot S$ is a $\mb{\nu}$-supermartingale
for all $H\in\mc{H}$ } \}.
\end{gather*}

\begin{ass}
\label{ELMM}
$$\mc{M}^{\mathrm{sup}}\ne\emptyset.$$
\end{ass}

\begin{rmk}
Suppose that we have no constraints, so that $\mc{K}=\mb{R}^d$. Our situation is now identical to 
\cite{CSW01} modulo the smoothness of the utility function. Consider the set 
\begin{gather*}
\mc{M}^{\mathrm{loc}}:=\{\nu\in ba_+(\mb{P}) : \nu \text{ is a probability measure equivalent
to $\mb{P}$ and }\\
 \text{ $H\cdot S$ is a $\mb{\nu}$ local martingale
for all $H\in\mc{H}$ } \}.
\end{gather*}
In \cite{CSW01} the authors assume that $\mc{M}^{\mathrm{loc}}\neq\emptyset$. We want to compare this to our Assumption \ref{ELMM} when there are no constraints. If $S$ is locally bounded then it is known that both are equivalent. In the case where $S$ is not locally bounded then we have $\mc{M}^{\mathrm{loc}}\subset\mc{M}^{\mathrm{sup}}$ and this inclusion may be strict. Thus it appears that our assumption is slightly weaker, however both these assumptions imply that the subset of $\mc{M}^c$ consisting of equivalent measures is nonempty and from this point of view may be regarded as equivalent. In particular all the results in \cite{CSW01} would hold
under our Assumption \ref{ELMM}.  

\end{rmk}

We look for  a description of $\mc{R}$
in terms of a budget constraint inequality. 
To this end we appeal to the ideas of
 \cite{FK97}. Define the set 
\begin{equation*}
\mc{S}:=\{H\cdot S : H\in\mc{H}\},  
\end{equation*} 
and consider the following assumption,
\begin{ass}[\cite{FK97} Assumption 3.1 ]
\label{Sass}
If $(H^n\cdot S)_{n\in\mb{N}}$ is a sequence in $\mc{S}$ which
is uniformly bounded from
below and converges in the semimartingale topology to a process $X$, then
$X\in\mc{S}$.
\end{ass}
For details on the semimartingale topology we refer the reader to \'{E}mery \cite{Em79} and M\'{e}min \cite{M80}. In our setting and notation \cite{FK97} Theorem 4.1 reads as follows.
\begin{thm}[\cite{FK97} Theorem 4.1]
\label{FKthm4.1}
Suppose that $\mc{M}^{\mathrm{sup}}\neq\emptyset$ and $\mc{S}$ satisfies Assumption
\ref{Sass}. Then for a process $V$ locally bounded from below the following are
equivalent:
\begin{enumerate}
\item There exist $H^V\in\mc{H}$ and an increasing nonnegative optional process $C^V$ such that
\begin{equation*}
V=V_0+H^V\cdot S-C^V.
\end{equation*}
\item $V$ is a $\nu$-local supermartingale for all $\nu\in\mc{M}^{\mathrm{sup}}$.
\end{enumerate}
\end{thm}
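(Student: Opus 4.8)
The implication (i)$\Rightarrow$(ii) is the straightforward one and I would dispatch it first. Given $V=V_0+H^V\cdot S-C^V$ with $H^V\in\mc{H}$ and $C^V$ nonnegative and nondecreasing with $C^V_0=0$, fix any $\nu\in\mc{M}^{\mathrm{sup}}$. By the characterisation recalled just before the statement, $H\cdot S$ is a $\nu$-supermartingale for every $H\in\mc{H}$; hence $V+C^V=V_0+H^V\cdot S$ is a $\nu$-supermartingale, and localising $C^V$ at the stopping times $\sigma_n:=\inf\{t:C^V_t\geq n\}$ shows that $V=(V_0+H^V\cdot S)-C^V$ is a $\nu$-local supermartingale. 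Since $\nu$ was arbitrary in $\mc{M}^{\mathrm{sup}}$, (ii) follows.

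For the converse --- the constrained optional decomposition --- I would argue as follows. Because $V$ is locally bounded from below, a first localisation reduces matters to the case where $V$ is bounded from below (say $V\geq0$, up to an irrelevant additive constant), and since $V$ is a $\nu$-supermartingale for the equivalent measure $\nu\in\mc{M}^{\mathrm{sup}}\neq\emptyset$ it is a special semimartingale, with canonical $\mb{P}$-decomposition $V=V_0+M+A$, where $M$ is a local martingale and $A$ is predictable of finite variation. The crux is to show that the requirement that $V$ be a $\nu$-supermartingale \emph{simultaneously for every} $\nu\in\mc{M}^{\mathrm{sup}}$ forces the drift of $V$ --- the finite-variation part $A$ together with the compensated jumps --- to point, for $\mb{P}\otimes dt$-almost every $(\omega,t)$, in a direction attainable by a portfolio taking values in $\mc{K}$. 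Reading off the semimartingale characteristics of $S$ and applying a measurable minimax/selection argument pointwise in $(\omega,t)$ (the polyhedral structure of $\mc{K}$ from Assumption \ref{chp_Uhalfline_Kassump} makes this selection transparent), one produces a predictable, $\mc{K}$-valued, $S$-integrable process $H^V$ such that $H^V\cdot S$ has the same local-martingale part as $M$ and drift dominating $A$; then $C^V:=V_0+H^V\cdot S-V$ is predictable, nonnegative and nondecreasing, and undoing the localisation yields (i).

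The step I expect to be the genuine obstacle is the construction of $H^V$: translating ``$\nu$-supermartingale for all $\nu\in\mc{M}^{\mathrm{sup}}$'' into an instantaneous drift inclusion, selecting the integrand measurably, and --- hardest of all --- controlling its $S$-integrability so that $H^V\in\mc{H}$. This is exactly where Assumption \ref{Sass} enters: the integrand $H^V$ is built as a limit of simpler strategies, through a discretisation in time together with a pasting procedure, and the closedness of $\mc{S}$ in the semimartingale topology is what guarantees that this limit is again of the form $H\cdot S$ with $H\in\mc{H}$. An alternative, more functional-analytic route would first establish, via a Hahn--Banach/bipolar argument using Assumption \ref{Sass} and $\mc{M}^{\mathrm{sup}}\neq\emptyset$, that the cone of claims dominated by terminal wealths from $\mc{H}$ is the polar of $\mc{M}^{\mathrm{sup}}$, and then combine this static super-hedging description with a uniform Doob--Meyer decomposition to obtain the dynamic one; but the measurable-selection difficulty reappears there when one extracts the predictable integrand, so I would regard it as the same obstacle in a different guise.
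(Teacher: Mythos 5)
First, a point of order: the paper does not prove this statement. It is quoted verbatim from F\"ollmer and Kramkov \cite{FK97} (their Theorem 4.1) and used as a black box; the only work done in the text is the verification, via \cite{CWZ08}, that the polyhedral-cone Assumption \ref{chp_Uhalfline_Kassump} implies the closedness Assumption \ref{Sass}. So there is no in-paper proof to compare against, and your argument has to stand on its own as a proof of the F\"ollmer--Kramkov theorem. Your direction (i)$\Rightarrow$(ii) is essentially correct, except that the localising times should be taken from the local lower bound of $V$ (equivalently of $H^V\cdot S - C^V$) rather than from $C^V$ itself: stopping at $\sigma_n=\inf\{t:C^V_t\geq n\}$ does not bound $C^V_{\sigma_n}$, since $C^V$ is only optional and may jump past $n$.

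The converse is a programme rather than a proof, and it contains one substantive error: you assert that $C^V:=V_0+H^V\cdot S-V$ comes out \emph{predictable}. In general it does not and cannot: the decomposition must hold simultaneously for every $\nu\in\mc{M}^{\mathrm{sup}}$, and the Doob--Meyer compensators of $V$ under different equivalent measures are different predictable processes, so the universal increasing part is only optional --- this is exactly why the result is called the \emph{optional} decomposition theorem and why the statement claims no more than optionality of $C^V$. Any route that extracts $H^V$ by matching predictable drifts under a single reference measure and then declares the remainder predictable is heading towards a false statement. Beyond this, the steps you yourself flag as ``the genuine obstacle'' --- turning the universal supermartingale property into a pointwise condition on the characteristics of $S$, measurably selecting a $\mc{K}$-valued $S$-integrable integrand, and showing that the limiting strategy is again of the form $H\cdot S$ with $H\in\mc{H}$ --- are where the entire content of the theorem lies, and they are not carried out. (For the record, \cite{FK97} do not argue pointwise in the characteristics: they introduce the upper variation process of the family $\mc{S}$, reduce to bounded $V$ by localisation, and run a Hahn--Banach separation in a space of semimartingales, Assumption \ref{Sass} being what keeps the separating limit inside $\mc{S}$; no polyhedrality of $\mc{K}$ is used there.) Two smaller glosses: ``locally bounded from below'' yields only a localising sequence, so reducing to $V\geq 0$ requires pasting decompositions across stochastic intervals; and the special-semimartingale property under $\mb{P}$ does not follow automatically from the $\nu$-supermartingale property, since local integrability is not invariant under equivalent changes of measure.
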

Our aim is to prove a super replication result via Theorem \ref{FKthm4.1}. 
We know that $\mc{M}^{\mathrm{sup}}$ is 
 nonempty, thus the only outstanding issue is to verify that $\mc{S}$ satisfies Assumption \ref{Sass}.
By Assumption \ref{chp_Uhalfline_Kassump} $\mc{K}$
is a polyhedral cone and thus we may apply  \cite{CWZ08} Theorem  3.5 
to show that there exists an 
$H^0\in\mc{H}$ such that 
$X=H^0\cdot S$. Therefore 
$\mc{S}$ satisfies Assumption \ref{Sass}.
\begin{rmk}
In \cite{KZ03} the authors consider a situation similar to ours however they
do not make any assumptions on the cone $\mc{K}$. They implicitly assume
that $\mc{S}$ satisfies Assumption \ref{Sass}. This is in fact false as a counterexample in \cite{CWZ08}
shows. Our results can therefore be viewed as augmenting \cite{KZ03} and
show that Assumption \ref{chp_Uhalfline_Kassump} is not innocuous and that
one must place some restrictions on $\mc{K}$.
\end{rmk}
We now give a key dual characterization of $\mc{R}$.
\begin{lem}
\label{InKiffnuK}
Suppose $R^-\in L^{\infty}(\mb{P})$. Then $R\in\mc{R}$ if and only if 
$\psi_{\nu}(R)\leq 0$ for all $\nu\in\mc{M}^c$.
\end{lem}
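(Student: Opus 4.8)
The plan is to prove both implications, the easy one by direct computation and the harder one by combining the super-replication theorem (Theorem \ref{FKthm4.1}) with the extended definition of $\psi_\nu$ on random variables bounded below. For the ``only if'' direction, suppose $R \in \mc{R}$, so $R \le (H\cdot S)_T$ for some $H \in \mc{H}$. Fix $\nu \in \mc{M}^c$, so $\nu = \nu_c$ is countably additive and $\psi_\nu(G) \le 0$ for all $G \in \mc{C}$. Since $R^- \in L^\infty(\mb{P})$, the random variable $R$ is bounded below, hence $\psi_\nu(R) = \lim_n \psi_\nu(R\wedge n) = \E[\tfrac{d\nu}{d\mb{P}} R]$ is well defined by the extension and the monotone convergence argument recalled before Assumption \ref{chp_Uhalfline_FGE}. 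Each truncation $R\wedge n$ is bounded above by $n$ and below by $-\|R^-\|_\infty$, and $R\wedge n \le (H\cdot S)_T \wedge n$; I want to conclude $\psi_\nu(R\wedge n) \le 0$. The cleanest route is to note that $(H\cdot S)_T \wedge n \in \mc{C}$ is false in general (the truncation of a super-replicable claim need not be super-replicable), so instead I would argue directly: $H\cdot S$ is a $\nu$-supermartingale only for $\nu \in \mc{M}^{\mathrm{sup}}$, not all of $\mc{M}^c$, so I should not lean on that. Instead, observe that for any $c$ with $(H\cdot S)_t \ge -c$, the bounded claim $(H\cdot S)_T \wedge n$ is dominated by a genuine element of $\mc{C}$ plus a constant — more carefully, $\min\{(H\cdot S)_T, n\} - (-c)^- $ and so on; the honest statement is that $(H\cdot S)_T \wedge n + c \ge 0$ and $(H\cdot S)_T \wedge n \le (H \cdot S)_T$, and since $H \cdot S$ is bounded below by $-c$, the stopped-and-truncated process techniques of \cite{FK97} or a direct supermartingale argument under each $\nu \in \mc{M}^c$ give $\E[\tfrac{d\nu}{d\mb{P}}((H\cdot S)_T \wedge n)] \le 0$; passing to the limit in $n$ and using $R \le (H\cdot S)_T$ yields $\psi_\nu(R) \le 0$.

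For the ``if'' direction — which I expect to be the main obstacle — suppose $R^- \in L^\infty(\mb{P})$ and $\psi_\nu(R) \le 0$ for every $\nu \in \mc{M}^c$, a fortiori for every $\nu \in \mc{M}^{\mathrm{sup}} \subset \mc{M}^c$. The idea is to construct a process $V$ whose terminal value dominates $R$ and which is a $\nu$-local supermartingale for all $\nu \in \mc{M}^{\mathrm{sup}}$, so that Theorem \ref{FKthm4.1} delivers the decomposition $V = V_0 + H^V \cdot S - C^V$ with $H^V \in \mc{H}$ and $C^V$ increasing nonnegative; then $R \le V_T \le V_0 + (H^V \cdot S)_T$, and since $V$ is bounded below ($R^- \in L^\infty$) one checks $H^V \cdot S$ is bounded below by a constant, so $H^V \in \mc{H}$ and $R - V_0 \in \mc{R}$, whence $R \in \mc{R}$ because $L^0_-(\mb{P}) \subset \mc{R}$ absorbs the constant $V_0$ if $V_0 \le 0$; if $V_0 > 0$ one must be slightly more careful and absorb it into the budget. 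The natural candidate for $V$ is the ``upper $\mc{M}^{\mathrm{sup}}$-Snell envelope'' of $R$, i.e. $V_t := \esssup_{\nu \in \mc{M}^{\mathrm{sup}}} \E_\nu[R \mid \mc{F}_t]$ (suitably taken in a càdlàg version), which is bounded below by $-\|R^-\|_\infty$ and, by the tower property and an essential-supremum pasting argument, is a $\nu$-supermartingale for each $\nu \in \mc{M}^{\mathrm{sup}}$; the hypothesis $\psi_\nu(R) = \E_\nu[R] \le 0$ for all such $\nu$ gives $V_0 \le 0$.

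The delicate points, and where I would spend the most care, are: (i) verifying that $V$ is locally bounded from below (immediate here since it is globally bounded below) and is a genuine $\nu$-local supermartingale for every $\nu \in \mc{M}^{\mathrm{sup}}$ — this requires the standard stability of $\mc{M}^{\mathrm{sup}}$ under pasting/concatenation of densities, which holds because the supermartingale property of $H\cdot S$ is stable under such operations; (ii) checking that $\psi_\nu(R) \le 0$ for all countably additive $\nu \in \mc{M}^c$ — not merely strictly positive ones — still forces $\E_\nu[R] \le 0$ for $\nu \in \mc{M}^{\mathrm{sup}}$, which is trivial since $\mc{M}^{\mathrm{sup}} \subset \mc{M}^c$; and (iii) after obtaining the decomposition, confirming the lower bound on $H^V \cdot S$: from $V = V_0 + H^V\cdot S - C^V$ with $C^V \ge 0$ increasing and $V \ge -\|R^-\|_\infty$ we get $H^V \cdot S = V - V_0 + C^V \ge -\|R^-\|_\infty - V_0$, a constant, so $H^V \in \mc{H}$. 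Finally one concludes $R \le V_T = V_0 + (H^V\cdot S)_T - C^V_T \le V_0 + (H^V\cdot S)_T \le (H^V\cdot S)_T$ using $V_0 \le 0$, hence $R \in \mc{R}$. The reduction from ``all of $\mc{M}^c$'' in the hypothesis to ``all of $\mc{M}^{\mathrm{sup}}$'' in the application of Theorem \ref{FKthm4.1} is what makes the argument go through cleanly, and spelling out the Snell-envelope construction and its supermartingale property under the whole family $\mc{M}^{\mathrm{sup}}$ is the technical heart of the proof.
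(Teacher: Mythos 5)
Your ``if'' direction is essentially the paper's proof: you form the Snell envelope $V_t=\esssup_{\nu\in\mc{M}^{\mathrm{sup}}}\E_\nu[R\mid\mc{F}_t]$, observe $V_0=\sup_{\nu\in\mc{M}^{\mathrm{sup}}}\psi_\nu(R)\le 0$ from the hypothesis restricted to $\mc{M}^{\mathrm{sup}}\subset\mc{M}^c$, invoke the supermartingale property of the envelope (the paper cites \cite{FK97} Lemma A.1 for exactly this pasting argument) and then Theorem \ref{FKthm4.1} to get $V=V_0+H^V\cdot S-C^V$, and conclude $R=V_T\le (H^V\cdot S)_T$ since $V_0\le0$ and $C^V_T\ge0$. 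That part is correct and matches the paper step for step.

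The ``only if'' direction, however, contains a genuine error that then forces you into hand-waving. You assert that ``the truncation of a super-replicable claim need not be super-replicable,'' and on that basis reject the clean argument. This is false in the present setting: $\mc{R}$ is \emph{defined} as $\{R: R\le (H\cdot S)_T \text{ for some } H\in\mc{H}\}$, i.e.\ it is downward closed, so from $R\le (H\cdot S)_T$ you get $R\wedge n\le R\le (H\cdot S)_T$, hence $R\wedge n\in\mc{R}$; since $R^-\in L^\infty(\mb{P})$ and $R\wedge n\le n$, the truncation is bounded, so $R\wedge n\in\mc{R}\cap L^\infty(\mb{P})=\mc{C}$. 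Then $\psi_\nu(R\wedge n)\le 0$ follows from nothing more than the definition of $\mc{M}$ as the polar of $\mc{C}$ --- no supermartingale property of $H\cdot S$ under $\nu\in\mc{M}^c$ is needed, which is fortunate because, as you yourself note, that property is only available for $\nu\in\mc{M}^{\mathrm{sup}}$. Letting $n\to\infty$ in the extended definition of $\psi_\nu$ finishes the direction. Your substitute argument (``stopped-and-truncated process techniques of \cite{FK97} or a direct supermartingale argument under each $\nu\in\mc{M}^c$'') is not a proof as written: the supermartingale route is exactly the one you correctly identified as unavailable for general $\nu\in\mc{M}^c$, and no alternative mechanism is actually supplied. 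Replace that paragraph with the one-line observation that $R\wedge n\in\mc{C}$ and the direction is immediate.
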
 
\begin{proof}
Suppose $R\in\mc{R}$ and $R^-\in L^\infty(\mb{P})$. Then $R\wedge n\in\mc{C}$
and we have by the definition of $\mc{M}$ 
\begin{equation*}
\psi_{\nu}(R\wedge n)\leq 0 \text{ for all } \nu\in\mc{M}^c.  
\end{equation*}
Now let $n$ tend to infinity to get the result. 

Conversely suppose that $R^-\in L^{\infty}(\mb{P})$ and $\psi_{\nu}(R)\leq 0$ for all
$\nu\in\mc{M}^c$. In particular, from \eqref{Msup=M}, this implies that we have 
\begin{equation*}
\psi_{\nu}(R)= \E\left[\frac{d\nu}{d\mb{P}}R\right]\leq 0 ,\, \text{ for all } \nu\in\mc{M}^{\mathrm{sup}}.
\end{equation*}
We may apply \cite{FK97} Lemma A.1 to show that the process $V^R$ defined by  
\begin{equation*} 
V^R_t:=\esssup_{\nu\in\mc{M}^{\mathrm{sup}}}\frac{\E\left[\frac{d\nu}{d\mb{P}}R\left|\right.\mc{F}_t\right]}
{\E\left[\frac{d\nu}{d\mb{P}}\left|\right.\mc{F}_t\right]}\quad 
\end{equation*}
with 
\begin{equation*}
V^R_0=\sup_{\nu\in\mc{M}^{\mathrm{sup}}}\E\left[\frac{d\nu}{d\mb{P}}R\right]
=\sup_{\nu\in\mc{M}^{\mathrm{sup}}}\psi_{\nu}(R)\leq 0
\end{equation*}
is a $\nu$-supermartingale for all $\nu\in\mc{M}^{\mathrm{sup}}$. Note that
here we have used the assumption that $\mc{F}_0$ is trivial. 

Using Theorem \ref{FKthm4.1} we get the existence of an $H^R\in\mc{H}$ and a nonnegative optional process $C^R$ such that $V^R=V^R_0+H^R\cdot S-C^R$.
Recall that $V_0^R\leq0$ and so at $T$, 
\begin{equation*}
(H^R\cdot S)_T\geq V^R_0+(H^R\cdot S)_T\geq V^R_0+(H^R\cdot S)_T-C^R_T=R,
\end{equation*}
that is, $R\in\mc{R}$.
The proof of the lemma is now complete.
\end{proof}

The dual problem in the present setting is the following
\begin{equation*}
w(x):=\inf_{\nu\in\mc{M}}\left(\E\left[\ti{U}\left(\frac{d\nu_c}{d\mb{P}}\right)\right]-\psi_{\nu}(B)+x\nu(\Omega)\right).
\end{equation*}
We have omitted the derivation of the dual problem. For an excellent overview
of how one should proceed given a general primal problem we refer to Rogers \cite{Rog03}. For more details on the present situation see \cite{CSW01}.


\section{Main Result and its Proof}
\label{chp_Uhalfline_proof}
Having collected all the necessary preliminaries we may state our main result. \begin{thm}
\label{chp_Uhalfline_mainthm}
Suppose that Assumptions \ref{chp_Uhalfline_Kassump}, \ref{chp_Uhalfline_UProps}, \ref{chp_Uhalfline_AEofU}, \ref{chp_Uhalfline_FGE} and \ref{ELMM} hold. For
$x>\sup_{\nu\in\mc{M}^{\mathrm{sup}}}\psi_{\nu}(B)$,
\begin{enumerate}
\item u(x)=w(x).
\item There exists $\nu^*\in\mc{M}$ optimal for $w(x)$, i.e.,
\begin{equation*}
w(x)=\E\left[\ti{U}\left(\frac{d\nu^*_c}{d\mb{P}}\right)\right]-\psi_{\nu^*}(B)+x\nu^*(\Omega).
\end{equation*}
\item There exists $H^*\in\mc{H}$  optimal for $u(x)$, i.e.,
\begin{equation*}
u(x)=\E\big[U\big(X^*-B\big)\big],
\end{equation*}
where $X^*=x+(H^*\cdot S)_T$ is the optimal terminal wealth.
\item The following relations hold, 
$$
\psi_{\nu^*_f}\big(X^*-B\big)=0,
\quad \psi_{\nu^*}\big(X^*\big)=x\nu^*(\Omega),
\quad X^*-B\in-\partial\ti{U}\left(\frac{d\nu^*_c}{d\mb{P}}\right).
$$
\end{enumerate} 
\end{thm}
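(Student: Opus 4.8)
The plan is to follow the standard convex-duality route adapted to the $L^\infty(\mathbb{P})^*$ setting of \cite{CSW01}, but exploiting the direct existence argument of \cite{KS03} to avoid inf-convolution. First I would establish weak duality, $u(x)\le w(x)$: for $R\in\mathcal{R}_0(x)$ and $\nu\in\mathcal{M}$, use the Fenchel–Young inequality $U(x+R-B)\le\ti{U}(d\nu_c/d\mathbb{P})+(d\nu_c/d\mathbb{P})(x+R-B)$ pointwise, take expectations, and control the finitely additive part via the extension of $\psi_\nu$ to random variables bounded below together with Lemma \ref{InKiffnuK} (which gives $\psi_\nu(R)\le 0$, noting $R^-\le B^-+\text{const}$ is bounded since $R\in\mathcal{R}$ and $x+R-B\ge 0$ on the support of finite utility). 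This yields $\E[U(x+R-B)]\le\E[\ti{U}(d\nu_c/d\mathbb{P})]-\psi_\nu(B)+x\nu(\Omega)$, hence $u(x)\le w(x)$; Assumptions \ref{chp_Uhalfline_FGE} and \ref{chp_Uhalfline_AEofU} (via \eqref{dualfinite}) ensure $w(x)<\infty$, and $x>\sup_{\nu\in\mathcal{M}^{\mathrm{sup}}}\psi_\nu(B)$ ensures the relevant primal feasible set is nonempty so $u(x)>-\infty$.

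Next I would prove existence of a dual minimiser $\nu^*$. Here I would invoke the version of Fenchel duality due to Rockafellar \cite{R66} as advertised in the introduction: set up the dual functional on $ba(\mathbb{P})$, show it is convex and weak$^*$-lower-semicontinuous, and use that sublevel sets intersected with the cone $\mathcal{M}$ are weak$^*$-compact (Banach–Alaoglu, since $\nu(\Omega)$ is bounded on sublevel sets because $x>\sup_{\mathcal{M}^{\mathrm{sup}}}\psi_\nu(B)$ makes the coefficient of $\nu(\Omega)$ effectively positive after a shift). This gives $\nu^*\in\mathcal{M}$ attaining $w(x)$, and simultaneously the Rockafellar theorem delivers $u(x)=w(x)$, proving (i) and (ii). For the primal optimiser in (iii), I would apply the \cite{KS03} technique: take a maximising sequence $R_n\in\mathcal{R}_0(x)$, use the dual characterisation of $\mathcal{R}$ (Lemma \ref{InKiffnuK}) to get that $(x+R_n-B)$ lies in a set closed under the relevant convergence (Komlós / Fatou-type), extract convex combinations converging a.s.\ to some $X^*-B$ with $X^*\in x+\mathcal{S}_T$, and use upper-semicontinuity of $U$ plus a uniform-integrability argument (again from AE$(\ti{U})<\infty$) to pass the limit through the expectation, getting $u(x)=\E[U(X^*-B)]$ with $X^*=x+(H^*\cdot S)_T$; the super-replication Theorem \ref{FKthm4.1} / Lemma \ref{InKiffnuK} guarantees the optimal $R^*$ is actually attained by an admissible $H^*$.

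Finally, the relations in (iv) come from tracing equality through the weak-duality chain with $R^*$ and $\nu^*$. Since $u(x)=w(x)$ and both are attained, every inequality used must be an equality a.s.: the Fenchel–Young inequality is tight exactly when $X^*-B\in-\partial\ti{U}(d\nu^*_c/d\mathbb{P})$; the inequality $\psi_{\nu^*}(R^*)\le 0$ together with $\psi_{\nu^*}(R^*)=\psi_{\nu^*}(X^*-x)$ must be an equality, giving $\psi_{\nu^*}(X^*)=x\nu^*(\Omega)$; and splitting $\psi_{\nu^*}=\psi_{\nu^*_c}+\psi_{\nu^*_f}$ and comparing with the countably additive contribution forces $\psi_{\nu^*_f}(X^*-B)=0$ (the purely finitely additive mass "sees" only the part of $X^*-B$ where it blows up, which must vanish at optimum). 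The main obstacle I anticipate is the passage to the limit for the primal problem: because $U$ is neither strictly concave nor differentiable and $\mathcal{R}$ may contain variables unbounded above, one must carefully combine the Komlós-type a.s.\ convergence of convex combinations with an AE-driven de la Vallée-Poussin uniform-integrability bound to justify $\limsup_n\E[U(x+R_n-B)]\le\E[U(X^*-B)]$ while retaining $X^*\in x+\mathcal{S}_T$; verifying that the limit stays admissible is exactly where Assumption \ref{chp_Uhalfline_Kassump} and \cite{CWZ08} Theorem 3.5 (closedness of $\mathcal{S}$) are essential.
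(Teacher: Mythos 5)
Your overall architecture coincides with the paper's: Fenchel duality in the dual system $\big(L^\infty(\mb{P}), ba(\mb{P})\big)$ for (i)--(ii), the Kramkov--Schachermayer compactness argument for (iii), and tightness of the Fenchel--Young chain for (iv). Parts (iii) and (iv) of your sketch are essentially the paper's Steps II and III: the paper implements the uniform integrability via the disjoint-sets contradiction of \cite{KS03}, driven by $u(z)/z\to0$, which in turn follows from Assumptions \ref{chp_Uhalfline_FGE} and \ref{chp_Uhalfline_AEofU}; your ``AE-driven de la Vall\'ee--Poussin bound'' is vaguer but the intent is the same, and your identification of where Assumption \ref{chp_Uhalfline_Kassump} and \cite{CWZ08} enter is exactly right. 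Step III in the paper is three successive contradiction arguments, which is just your ``every inequality in the chain must be tight'' made precise.

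The gap is in Step I. Rockafellar's theorem (Theorem \ref{chp_Uhalfline_Rockthm1}) delivers both $u(x)=w(x)$ and the attainment of the minimum in one stroke, but only once its hypothesis is verified: one must exhibit a point of $L^\infty(\mb{P})$ at which $I_{U^{x,B}}$ is finite and norm-continuous and $\delta_{\mc{C}}$ is finite. This is the crux (cf.\ Remark \ref{BF02rmk}); the paper constructs $R^B=(H^B\cdot S)_T\wedge m_0\in\mc{C}$ super-replicating $B$ from $x_1:=\sup_{\nu\in\mc{M}^{\mathrm{sup}}}\psi_\nu(B)$, uses $x>x_1$ to keep arguments of $U$ bounded away from $0$, and checks continuity by dominated convergence. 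You skip this and instead propose weak$^*$ lower semicontinuity plus Banach--Alaoglu compactness of sublevel sets to get the minimiser. That route is problematic: the Yosida--Hewitt decomposition $\nu\mapsto\nu_c$ is not weak$^*$-continuous, so lower semicontinuity of $\nu\mapsto\E\big[\ti{U}(d\nu_c/d\mb{P})\big]$ on $ba(\mb{P})$ is far from automatic and would need its own nontrivial proof. A second omission: you take for granted that the Fenchel dual of the primal has the stated form. Identifying $\delta^*_{\mc{C}}=\delta_{(\mc{C})^0}$ is easy, but computing $I^*_{U^{x,B}}(\nu)$ requires Rockafellar's characterization of conjugates of integral functionals on $L^\infty(\mb{P})$ (\cite{R71}, via \cite{BF02} Lemma 3.1); this is precisely where the splitting into $\nu_c$ and $\nu_f$ enters and where one must verify $\delta^*_{\D(I_U)}(-\nu_f)=0$ for $\nu\in\mc{M}$. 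Finally, the preliminary reduction $u(x)=\sup_{G\in\mc{C}}\E[U(x+G-B)]$ (Lemma \ref{supK0=supC}) is missing from your sketch; without it the primal problem is not an optimization over $L^\infty(\mb{P})$ and the duality theorem cannot be applied at all.
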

\begin{rmk}
When there is no random endowment ($B\equiv0$) 
(iv) implies that $\psi_{\nu^*_f}\big(X^*\big)=0$
and $\psi_{\nu^*_c}\big(X^*\big)=x\nu^*(\Omega)$. 
In particular we could formulate our dual problem in 
$L^1(\mb{P})$ and omit the singular measure $\nu_f$, as in \cite{BTZ04,KS99,WZ08}.  

When $\nu^*\ne0$ (a sufficient condition for this is  $U(\infty)=\infty$)
we may normalize $\nu^*$ to get an equivalent probability
measure (with regular and singular parts). If we denote 
$y^*=\nu^*(\Omega)>0$ we can express parts (ii) and (iv) above
in the standard form  as those in the utility maximization literature.

Westray and Zheng \cite{WZ09} show that when $B\equiv0$ the conditions on
budget equality, subdifferential relation and feasibility are the minimal sufficient conditions for $X^*$ being a primal optimizer if the utility function $U$ is not strictly concave. In this sense, Theorem~\ref{chp_Uhalfline_mainthm} is almost a necessary and sufficient optimality condition if a dual optimizer is known to exist.   
\end{rmk}

We prove our result in three steps. First we apply a version of the Fenchel
duality theorem, Theorem \ref{chp_Uhalfline_Rockthm1}, to show the existence of a dual solution and the equality of the value functions. We then adapt a technique from \cite{KS03} to find a primal optimizer. The proof is concluded by using convex analysis to show
that the three equalities of Theorem \ref{chp_Uhalfline_mainthm} (iv) hold.
For ease of exposition each of the three steps is broken up into a series
of lemmata.
\subsection*{Step I - Equality of the Value Functions and Existence of an
Optimal Dual Solution}
We begin by showing that in the primal problem it is sufficient to take the
maximum over $\mc{C}$. Recall that $\mc{C}=\mc{R}\cap L^\infty(\mb{P})$ and
$\mc{R}_0(x)$ is the subset of $\mc{R}$ for which the expectation in (\ref{primalproblem})
is well defined. Note that if $G\in\mc{C}$ then it is bounded
and $\E\big[U(x+G-B)^+\big]$ is always finite.
We therefore avoid the problems related to the restriction from $\mc{R}$
to $\mc{R}_0(x)$, see Remark \ref{chp_Uhalfline_expfinite}. 
\begin{lem} 
\label{supK0=supC}
For all $x\in\mb{R}$
\begin{equation*}
u(x)=\sup_{R\in\mc{R}_0(x)}\E\big[U(x+R-B)\big]=
\sup_{G\in\mc{C}}\E\big[U(x+G-B)\big].
\end{equation*}
\end{lem}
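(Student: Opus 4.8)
The plan is to prove the two inequalities separately, the nontrivial direction being $u(x) \le \sup_{G \in \mc{C}} \E[U(x+G-B)]$, since $\mc{C} \subset \mc{R}_0(x)$ follows from boundedness (as noted just before the lemma, $\E[U(x+G-B)^+] < \infty$ for $G \in \mc{C}$, and $U(x+G-B)^- = \infty$ only on a set where $x+G-B < 0$, which still gives a well-defined, possibly $-\infty$, expectation), so the middle term dominates the right-hand term trivially, while the left equals the middle by definition.

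For the reverse inequality, fix $R \in \mc{R}_0(x)$ and truncate: set $G_n := (R \wedge n) \vee (-n)$, or more simply $G_n := R \wedge n$ truncated below as well, so that $G_n \in L^\infty(\mb{P})$. The key point is that $G_n$ is still super-replicable: since $R \le (H \cdot S)_T$ for some $H \in \mc{H}$ and $0 \in \mc{K}$ so $L^0_-(\mb{P}) \subset \mc{R}$, any random variable dominated by $R$ from below lies in $\mc{R}$ — in particular $G_n \le R \le (H\cdot S)_T$ gives $G_n \in \mc{R}$, hence $G_n \in \mc{C}$. Therefore $\E[U(x+G_n-B)] \le \sup_{G\in\mc{C}} \E[U(x+G-B)]$ for every $n$. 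It then remains to pass to the limit and show $\limsup_n \E[U(x+G_n-B)] \ge \E[U(x+R-B)]$.

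The main obstacle is this limiting step, since $U$ is only concave and increasing (not bounded), so one must control both the positive and negative parts. First I would handle the lower truncation: replacing $R$ by $R \vee (-n)$ only increases $U(x + R \vee(-n) - B)$ pointwise (as $U$ is increasing), and $U(x + R\vee(-n) - B) \downarrow U(x+R-B)$; but since the issue is a lower bound on the $\limsup$, increasing the integrand is harmless — so it suffices to treat $G_n = R \wedge n$ (with $R^- $ possibly unbounded handled by noting $U = -\infty$ on the negative part regardless). For the upper truncation $R \wedge n \uparrow R$, monotone convergence on the positive part and either monotone or dominated convergence on the negative part (using $U(x+R-B)^- \le U(x + (R\wedge n) - B)^-$ pointwise for $n$ large enough that things are negative, together with $\E[U(x+R-B)^-] < \infty$ from $R \in \mc{R}_0(x)$) yields $\E[U(x+(R\wedge n)-B)] \to \E[U(x+R-B)]$. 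Combining, $\E[U(x+R-B)] = \lim_n \E[U(x+(R\wedge n)-B)] \le \sup_{G\in\mc{C}}\E[U(x+G-B)]$, and taking the supremum over $R \in \mc{R}_0(x)$ completes the proof. I would also remark that if $\E[U(x+R-B)] = -\infty$ the inequality is trivial, so one may assume it is finite or $+\infty$, streamlining the convergence argument.
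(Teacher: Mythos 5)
Your overall route coincides with the paper's: reduce to the nontrivial inequality, discard those $R$ with $\E[U(x+R-B)]=-\infty$, truncate from above so that the truncated variable lies in $\mc{C}$, and pass to the limit by monotone convergence. (The lower truncation you agonize over is a non-issue: once $\E[U(x+R-B)]>-\infty$ we have $x+R-B\geq0$ a.s., hence $R\geq-(x+b)$ and $R\wedge n\in L^\infty(\mb{P})$ automatically.) The gap is in the limiting step, which is exactly where the paper invests essentially all of its effort. You propose to handle the negative parts by dominated convergence, ``using $U(x+R-B)^-\leq U(x+(R\wedge n)-B)^-$ together with $\E[U(x+R-B)^-]<\infty$''; but that inequality dominates the \emph{limit} by the \emph{sequence}, which is the wrong direction for the dominated convergence theorem. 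Your fallback, downward monotone convergence of $U(x+(R\wedge n)-B)^-\downarrow U(x+R-B)^-$, requires knowing that $\E\big[U(x+(R\wedge n_0)-B)^-\big]<\infty$ for some finite $n_0$, and this is precisely the nontrivial point: a decreasing sequence of nonnegative functions can have infinite integral at every stage while its pointwise limit is integrable. A priori it could be that $\E[U(x+(R\wedge n)-B)]=-\infty$ for every $n$, in which case your truncations contribute nothing to $\sup_{G\in\mc{C}}\E[U(x+G-B)]$ and the argument collapses.

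The paper's proof is devoted to closing exactly this gap: it invokes Assumption \ref{chp_Uhalfline_UProps} to choose $n_0$ with $\sup\partial U(x+n_0-b)\leq1$ and then uses a subgradient estimate to conclude that $\E\big[U(x+(H^\eps\cdot S)_T\wedge n_0-B)\big]>-\infty$, after which upward monotone convergence over $n\geq n_0$ finishes the proof. A more elementary verification is also available in this bounded-endowment setting: on $\{R>n\}$ with $n>b-x$ one has $x+(R\wedge n)-B=x+n-B\geq x+n-b>0$, so the integrand there is bounded below by the finite constant $U(x+n-b)$, while on $\{R\leq n\}$ it agrees with $U(x+R-B)$, whose negative part is integrable. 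Some such verification is indispensable, and it is missing from your write-up; as stated, the convergence $\E[U(x+(R\wedge n)-B)]\to\E[U(x+R-B)]$ is not justified.
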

\begin{proof}
Fix $x\in\mb{R}$ and observe that since $\mc{C}\subset\mc{R}_0(x)$ it is only necessary to show that
\begin{equation*}
\sup_{R\in\mc{R}_0(x)}\E\big[U(x+R-B)\big]\leq\sup_{G\in\mc{C}}\E\big[U(x+G-B)\big].
\end{equation*}
We may suppose in addition that 
\begin{equation*}
\sup_{R\in\mc{R}_0(x)}\E\big[U(x+R-B)\big]>-\infty,
\end{equation*}
otherwise the inequality is immediate.
For each $\eps>0$ we can find $H^\eps\in\mc{H}$ such that 
\begin{equation*}
\E\big[U(x+(H^\eps\cdot S)_T-B)\big]\geq\sup_{R\in\mc{R}_0(x)}\E\big[U(x+R-B)\big]-\eps.
\end{equation*}
Since $H^\eps\in\mc{H}$ we have that $(H^\eps\cdot S)_T\wedge n\in\mc{C}$ for all $n\in\mb{N}$.
If we show that there exists $n_0$ such that 
\begin{equation*}
\E\big[U(x+(H^\eps\cdot S)_T\wedge n_0-B)\big]>-\infty,
\end{equation*}
then we may apply the monotone convergence theorem and deduce, 
\begin{eqnarray*}
\sup_{G\in\mc{C}}\E\big[U(x+G-B)\big]&\geq&
\lim_{n\to\infty}\E\big[U(x+(H^\eps\cdot S)_T\wedge n-B)\big]\\
&\geq&\sup_{R\in\mc{R}_0(x)}\E\big[U(x+R-B)\big]-\eps.
\end{eqnarray*} 
This then provides the required inequality.

To find such an $n_0$ we first observe that $\E\big[U(x+(H^\eps\cdot S)_T-B)\big]>-\infty$, which implies  
\begin{equation}
\label{lowerintegrable}
\E\big[U(x+(H^\eps\cdot S)_T-B)^-\big]<\infty.
\end{equation}
A consequence of the definition of the subgradient for the concave function
$U$ is that for two points $z_1$ and $z_2$ with $z_1<z_2$, 
\begin{equation*}
\inf \partial U(z_1)\geq \sup \partial U(z_2).
\end{equation*}
From Assumption \ref{chp_Uhalfline_UProps} we can now deduce that 
\begin{equation*}
\lim_{x\to\infty}\inf \partial U(x) = 0
\end{equation*}
and thus we may choose $n_0\in\mb{N}$ with $\sup\partial U(x+n_0-b)\leq1$.
Observe that from the subgradient inequality 
\begin{equation*}
U(x+(H^\eps\cdot S)_T-B)\leq U(x+(H^\eps\cdot S)_T\wedge n_0-B)
+q\big((H^\eps\cdot S)_T-(H^\eps\cdot S)_T\wedge n_0\big), 
\end{equation*}
for any $q\in\partial U\big(x+(H^\eps\cdot S)_T\wedge n_0-B\big)$. This continues
to hold if we multiply both sides by $\I{\{(H^\eps\cdot S)_T\geq n_0\}}$.
Since on $\{(H^\eps\cdot S)_T\geq n_0\}$ we have that any $q\in\partial U\big(x+(H^\eps\cdot S)_T\wedge n_0-B\big)$ satisfies $|q|\leq1$ we deduce that
\begin{equation*}
\begin{split}U(x+(&H^\eps\cdot S)_T\wedge n_0-B)\I{\{(H^\eps\cdot S)_T\geq n_0\}}\\
&\geq \big(U(x+(H^\eps\cdot S)_T-B)-2|(H^\eps\cdot S)_T|\big)
\I{\{(H^\eps\cdot S)_T\geq n_0\}}.
\end{split}
\end{equation*}
On the set $\{(H^\eps\cdot S)_T<n_0\}$ we have that $(H^\eps\cdot S)_T=(H^\eps\cdot S)_T\wedge n_0$. Combining the above two estimates gives
\begin{equation*}
U(x+(H^\eps\cdot S)_T\wedge n_0-B)^-\leq 2\big(U(x+(H^\eps\cdot S)_T-B)^-+
|(H^\eps\cdot S)_T|\big).  
\end{equation*}
It now follows from \eqref{lowerintegrable} together with $H^\eps\in\mc{H}$
that $U(x+(H^\eps\cdot S)_T\wedge n_0-B)^-\in L^1(\mb{P})$. It
then must be the case that 
\begin{equation*}
\E\big[U(x+(H^\eps\cdot S)_T\wedge n_0-B)\big]>-\infty.
\end{equation*}
This provides the existence of such an $n_0$ and completes the proof.
\end{proof}

We use the method of \cite{BF02,BF05} to establish Theorem \ref{chp_Uhalfline_mainthm} (i) and (ii). The key result needed in the proof is the following Fenchel
Duality Theorem, stated for the dual system $\big(L^\infty(\mb{P}),ba(\mb{P})\big)$.

\begin{thm}[Rockafellar \cite{R66} Theorem 1]
\label{chp_Uhalfline_Rockthm1}
Suppose $\alpha:L^\infty(\mb{P})\to\mb{R}\cup\{+\infty\}$ and $\beta:L^\infty(\mb{P})\to\mb{R}\cup\{-\infty\}$ are respectively proper convex and concave functionals. If either $\alpha$ or $\beta$ is continuous at some point where both functions are finite then
\begin{equation*}
\sup_{G\in L^{\infty}(\mb{P})}\{\beta(G)-\alpha(G)\}=\min_{v\in ba(\mb{P})}\{\alpha^*(\nu)-\beta^*(\nu)\},
\end{equation*}
where the functionals $\alpha^*$ and $\beta^*$ are respectively the \textit{Fenchel convex conjugate} and \textit{Fenchel concave conjugate} defined on $ba(\mb{P})$ by 
\begin{equation*}
\alpha^*(\nu):=\sup_{G\in L^\infty(\mb{P})}\{\psi_{\nu}(G)-\alpha(G)\} \text{ and }
\beta^*(\nu):=\inf_{G\in L^\infty(\mb{P})}\{\psi_{\nu}(G)-\beta(G)\}.
\end{equation*}
\end{thm}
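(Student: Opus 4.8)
The plan is to prove this in the two standard parts: weak duality, which is immediate from the definitions of the conjugates, and strong duality together with attainment, which I would obtain by a geometric separation argument in $L^\infty(\mathbb{P}) \times \mathbb{R}$. For weak duality, fix any $G \in L^\infty(\mathbb{P})$ and any $\nu \in ba(\mathbb{P})$; the definitions give $\alpha^*(\nu) \geq \psi_\nu(G) - \alpha(G)$ and $\beta^*(\nu) \leq \psi_\nu(G) - \beta(G)$, so subtracting yields $\beta(G) - \alpha(G) \leq \alpha^*(\nu) - \beta^*(\nu)$. Taking the supremum over $G$ and the infimum over $\nu$ shows the left-hand side is at most the right-hand side. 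Consequently, if I can produce a single $\nu^* \in ba(\mathbb{P})$ with $\alpha^*(\nu^*) - \beta^*(\nu^*) \leq p$, where $p := \sup_{G}\{\beta(G) - \alpha(G)\}$, then equality of the two values holds and the infimum is attained at $\nu^*$, upgrading it to a minimum.

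For the reverse inequality I would argue geometrically. Consider the two convex subsets of $L^\infty(\mathbb{P}) \times \mathbb{R}$ given by the epigraph $C = \{(G,t) : \alpha(G) \leq t\}$ and the translated hypograph $D = \{(G,t) : t \leq \beta(G) - p\}$; convexity of $\alpha$ and concavity of $\beta$ make these convex. The definition of $p$ as a supremum forces the interior of either set to miss the other: if, for instance, $(G,t)$ lay in the interior of $C$ and also in $D$, then $\alpha(G) < t \leq \beta(G) - p$, contradicting $\beta(G) - \alpha(G) \leq p$. The hypothesis that one of $\alpha$, $\beta$ is continuous at a point where both are finite guarantees that the corresponding set has nonempty interior, so I may invoke the Hahn--Banach separation theorem to obtain a nonzero continuous linear functional on $L^\infty(\mathbb{P}) \times \mathbb{R}$ separating $C$ and $D$. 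Under the isomorphism $L^\infty(\mathbb{P})^* \cong ba(\mathbb{P})$ this functional is a pair $(\nu, s)$ acting by $(G,t) \mapsto \psi_\nu(G) + st$, and the fact that $C$ is an epigraph, so that $t$ may be sent to $+\infty$, forces $s \geq 0$.

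The crux is to rule out the degenerate vertical case $s = 0$, and this is precisely where the continuity qualification does its work. If $s = 0$, the separating functional would be $\psi_\nu$ with $\nu \neq 0$, and it would have to separate the projections of $C$ and $D$ onto $L^\infty(\mathbb{P})$; but the point $G_0$ of continuity lies in the projection of one set and in the interior of the projection of the other, which no nonzero $\psi_\nu$ can do, so $s > 0$. After normalizing $s = 1$ and writing $\mu = -\nu$, reading off the separation inequality at the boundary heights $t = \alpha(G)$ and $t = \beta(G) - p$ gives $\alpha^*(\mu) \leq -c$ and $\beta^*(\mu) \geq -c - p$ for the separating constant $c$, whence $\alpha^*(\mu) - \beta^*(\mu) \leq p$. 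Combined with weak duality this yields both the equality of the values and attainment at $\nu^* = \mu$. I expect the exclusion of $s = 0$ to be the main obstacle, since it is the only step that genuinely uses the continuity assumption and is exactly the place where the conclusion fails without it.
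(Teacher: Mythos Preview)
The paper does not provide its own proof of this statement: Theorem~\ref{chp_Uhalfline_Rockthm1} is quoted verbatim from Rockafellar \cite{R66}, Theorem~1, and is used as an off-the-shelf tool in the proof of Theorem~\ref{u=w}. There is therefore nothing in the paper to compare your argument against.

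That said, your proposal is the standard and correct route to this result, and is essentially Rockafellar's own argument: weak duality from the conjugate definitions, then Hahn--Banach separation of the epigraph of $\alpha$ from the shifted hypograph of $\beta$ in $L^\infty(\mb{P})\times\mb{R}$, with the continuity hypothesis used first to guarantee nonempty interior and then to rule out the degenerate ``vertical'' separating functional $(s=0)$. Two small points worth tightening if you write it out in full. First, dispose of the case $p=+\infty$ at the outset (weak duality then gives the right-hand side is $+\infty$ as well, and the ``$\min$'' is interpreted vacuously). Second, in the $s=0$ step be explicit that continuity of, say, $\alpha$ at $G_0$ places $G_0$ in the \emph{interior} of $\D(\alpha)$, so a nonzero $\psi_\nu$ bounded on one side by a constant on a full neighbourhood of $G_0$ would have a local extremum there, forcing $\nu=0$; this is exactly what you sketch, but it is the one place where the argument can be written loosely.
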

The importance of Theorem \ref{chp_Uhalfline_Rockthm1} is that it not only
shows the equality of the primal and dual value functions but also establishes
the existence of the optimal dual solution. The following result provides
the details.

\begin{thm} 
\label{u=w}
For $x>\sup_{\nu\in\mc{M}^{\mathrm{sup}}}\psi_{\nu}(B)$ we have
\begin{equation*}
u(x)=w(x)=\min_{\nu\in\mc{M}}\left(\E\left[\ti{U}\left(\frac{d\nu_c}{d\mb{P}}\right)\right]-\psi_{\nu}(B)+x\nu(\Omega)\right).
\end{equation*}
\end{thm}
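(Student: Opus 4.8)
The plan is to cast the primal problem in the abstract Fenchel form of Theorem \ref{chp_Uhalfline_Rockthm1}, identify the two conjugate functionals, and check the qualification hypothesis (continuity of one functional at a point of finiteness of both). By Lemma \ref{supK0=supC} we may replace $\mc{R}_0(x)$ by $\mc{C}$, so $u(x)=\sup_{G\in\mc{C}}\E[U(x+G-B)]$. First I would define $\beta:L^\infty(\mb{P})\to\mb{R}\cup\{-\infty\}$ by $\beta(G):=\E[U(x+G-B)]$, which is concave and, for $x>b$ (in particular for $x>\sup_{\nu\in\mc{M}^{\mathrm{sup}}}\psi_\nu(B)$), finite at $G=0$ since $U(x-B)$ is then bounded; and $\alpha:L^\infty(\mb{P})\to\mb{R}\cup\{+\infty\}$ to be the convex indicator of $\mc{C}$, i.e. $\alpha(G):=0$ if $G\in\mc{C}$ and $+\infty$ otherwise, so that $\sup_{G\in L^\infty}\{\beta(G)-\alpha(G)\}=u(x)$.

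Next I would compute the two Fenchel conjugates on $ba(\mb{P})$. Since $\alpha$ is the indicator of the cone $\mc{C}$, its convex conjugate $\alpha^*$ is the support function of $\mc{C}$, which equals $0$ on the polar cone $\mc{C}^0=\mc{M}$ and $+\infty$ elsewhere; this already restricts the dual minimisation to $\nu\in\mc{M}$. For $\beta^*(\nu)=\inf_{G}\{\psi_\nu(G)-\E[U(x+G-B)]\}$, a change of variables $F=x+G-B$ (legitimate since $B\in L^\infty$) gives $\beta^*(\nu)=x\nu(\Omega)-\psi_\nu(B)-\sup_{F\in L^\infty}\{\E[U(F)]-\psi_\nu(F)\}$. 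The remaining supremum is a Rockafellar-type integral-functional conjugate over $L^\infty(\mb{P})$ against a finitely additive measure; using the Yosida--Hewitt decomposition $\nu=\nu_c+\nu_f$ (Theorem \ref{YHthm}) together with the monotonicity of $U$ (so that $U(F)$ is controlled from above along any sequence driving $F$ to $+\infty$ on the support of $\nu_f$, whence $\nu_f$ contributes nothing except through its nonnegativity) and the fact that the countably additive part integrates against $d\nu_c/d\mb{P}$ pointwise, I would show this supremum equals $\E[\ti U(d\nu_c/d\mb{P})]$. Assembling, $\alpha^*(\nu)-\beta^*(\nu)=\E[\ti U(d\nu_c/d\mb{P})]-\psi_\nu(B)+x\nu(\Omega)$ for $\nu\in\mc{M}$, which is exactly the dual functional $w(x)$; moreover Assumption \ref{chp_Uhalfline_FGE} (with \eqref{dualfinite}) guarantees this is finite for some $\nu^1\in\mc{M}$, so both $\alpha^*$ and $\beta^*$ are somewhere finite and the $\min$ is attained. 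Theorem \ref{chp_Uhalfline_Rockthm1} then yields $u(x)=\min_{\nu\in\mc{M}}(\cdots)=w(x)$.

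The step I expect to be the main obstacle is verifying the continuity hypothesis of Theorem \ref{chp_Uhalfline_Rockthm1}: one needs $\beta$ (or $\alpha$) to be norm-continuous at a point where both functionals are finite. Here $\alpha$, being an indicator, is never continuous at an interior-free cone, so the burden falls on $\beta$. I would argue that $\beta(G)=\E[U(x+G-B)]$ is finite and continuous at $G=0$ whenever $x-b>0$: for $\|G\|_\infty<\delta$ with $x-b-\delta>0$, the argument $x+G-B$ stays in a compact subinterval of $(0,\infty)$ on which $U$ is finite, Lipschitz (being concave with bounded subgradients on compacts) and hence $|\beta(G)-\beta(0)|\le L\|G\|_\infty$; finiteness of $\beta(0)$ uses $x>b\ge\|B\|_\infty$. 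The hypothesis $x>\sup_{\nu\in\mc{M}^{\mathrm{sup}}}\psi_\nu(B)$ is at least $x>b$ is \emph{not} automatic, so a small preliminary remark is needed: since every $\nu\in\mc{M}^{\mathrm{sup}}$ is a probability measure, $\psi_\nu(B)\le b$, hence $\sup_{\nu\in\mc{M}^{\mathrm{sup}}}\psi_\nu(B)\le b$; conversely the weaker bound $x>\sup_\nu\psi_\nu(B)$ is precisely what makes $\mc{R}_0(x)$ nonempty in the relevant sense and $\beta$ proper — I would note that for $x$ in this range one can still find a point (a suitable $G\in\mc{C}$ rather than $G=0$) where $\beta$ is finite and locally Lipschitz, since super-replicability of $x-B$ from some admissible strategy keeps the wealth bounded away from $0$. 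The second delicate point, subsidiary to the above, is the identification of the integral conjugate against $\nu_f$, where I would invoke the standard Yosida--Hewitt argument that purely finitely additive measures "live at infinity" and so, for the increasing function $U$, contribute only the linear term $x\nu_f(\Omega)-\psi_{\nu_f}(B)$ with no $\ti U$-penalty.
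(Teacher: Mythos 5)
Your proposal is correct and follows essentially the same route as the paper: Rockafellar's Fenchel duality applied to $\beta=I_{U^{x,B}}$ and $\alpha=\delta_{\mc{C}}$, identification of $\alpha^*$ with the indicator of the polar cone $\mc{M}$, the change of variables $F=x+G-B$, and the Rockafellar--Yosida--Hewitt characterization $I_U^*(\nu)=I_{U^*}(\nu_c)-\delta^*_{\D(I_U)}(-\nu_f)$ with the singular term shown to vanish on $\mc{M}$. Both delicate points you flag are resolved exactly as you anticipate: the paper takes the continuity point to be a truncated super-replicating claim $R^B=(H^B\cdot S)_T\wedge m_0\in\mc{C}$ with $x_1+(H^B\cdot S)_T\geq B$ (so that $x+R^B-B\geq x-x_1>0$), precisely because $x>\sup_{\nu\in\mc{M}^{\mathrm{sup}}}\psi_\nu(B)$ does not force $x>b$.
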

\begin{proof} Let $x_1:=\sup_{\nu\in\mc{M}^{\mathrm{sup}}}\psi_{\nu}(B)$ so that $x>x_1$
and define the concave
functional $I_{U^{x,B}}:L^\infty(\mb{P})\to\mb{R}\cup\{-\infty\}$ by 
\begin{equation*}
I_{U^{x,B}}(G):=\E\big[U(x+G-B)\big].
\end{equation*}
We write $\delta_{\mc{C}}$ for the indicator function in the sense of convex
analysis, so that  
\begin{equation*}
\delta_{\mc{C}}(G)=\left\{\begin{array}{ll}
 0 & \text{ for $G\in\mc{C}$},\\
\infty & \text{ for $G\notin\mc{C}$}. \\ \end{array}\right.
\end{equation*}
Using Lemma \ref{supK0=supC} we  can write 
\begin{equation*}
u(x)=\sup_{G\in L^\infty(\mb{P})}\{I_{U^{x,B}}(G)-\delta_{\mc{C}}(G)\}.
\end{equation*}
We first construct an $R^B$ for which $R^B\in \mc{C}$, $\big|I_{U^{x,B}}(R^B)\big|<\infty$ and $I_{U^{x,B}}$
is continuous at $R^B$. We can then apply Theorem \ref{chp_Uhalfline_Rockthm1}
to get
\begin{equation}
\label{chp_Uhalfline_eq:1}
u(x)=\min_{\nu\in ba(\mb{P})}\{\delta^*_{\mc{C}}(\nu)-I^*_{U^{x,B}}(\nu)\}.
\end{equation}
Proceeding as in Lemma \ref{InKiffnuK}
 we deduce the existence of $H^B\in\mc{H}$ such that $x_1+(H^B\cdot S)_T\geq
B$. Since  $\|B\|_{L^\infty(\mb{P})}:=b<\infty$, for $m\geq b-x_1$ we have
\begin{equation*}  
x_1+(H^B\cdot S)_T\wedge m\geq B.
\end{equation*}
Pick such an $m_0$ and write $R^B=(H^B\cdot S)_T\wedge m_0$, an element of $\mc{C}$.
We have the following inequalities for $I_{U^{x,B}}(R^B)$,
\begin{equation*} 
U(x-x_1)\leq I_{U^{x,B}}(R^B)\leq U(x+m_0+b).
\end{equation*}
This implies that 
\begin{equation*}
\big|I_{U^{x,B}}(R^B)\big|\leq \max\{|U(x+m_0+b)|,|U(x-x_1)|\}<\infty.
\end{equation*}
We now show that $I_{U^{x,B}}$ is continuous at $R^B$ with respect to the
norm topology on $L^\infty(\mb{P})$. Suppose $(G_n)_{n\in\mb{N}}$ is a sequence
in $L^\infty(\mb{P})$ converging to $R^B$ and set $\eps_0:=(x-x_1)/2$. For
$\|G_n-R^B\|_{L^\infty(\mb{P})}<\eps_0$ we have 
the estimate
\begin{equation*}
\big|U(x+G^n-B)\big|\leq \max\{|U(x+m_0+\eps_0+b)|,|U(\eps_0)|\}.
\end{equation*}
The dominated convergence theorem now implies that $I_{U^{x,B}}$ is continuous
at $R^B$. The conditions of Theorem \ref{chp_Uhalfline_Rockthm1} are now
satisfied and \eqref{chp_Uhalfline_eq:1} follows.

The next step is to prove that 
\begin{equation*}
\min_{\nu\in ba(\mb{P})}\{\delta^*_{\mc{C}}(\nu)-I^*_{U^{x,B}}(\nu)\}
=\min_{\nu\in\mc{M}}\left(\E\left[\ti{U}\left(\frac{d\nu_c}{d\mb{P}}\right)\right]-\psi_{\nu}(B)+x\nu(\Omega)\right).
\end{equation*}
Using the fact that $\mc{C}$ is a cone one can show 
\begin{equation}
\label{u=w:1}
 \delta^*_{\mc{C}}(\nu)=\sup_{G\in \mc{C}}\{\psi_{\nu}(G)\}=\delta_{(\mc{C})^0}(\nu).
\end{equation} 
Here $(\mc{C})^0$ denotes the polar of the cone $\mc{C}$.
In addition we have
\begin{eqnarray}
\nonumber I^*_{U^{x,B}}(\nu)&=&\inf_{G\in L^\infty(\mb{P})}\{\psi_{\nu}(G)-I_{U^{x,B}}(G)\}\\
\label{u=w:2} &=&I^*_U(\nu)+\psi_{\nu}(B)-x\nu(\Omega),
\end{eqnarray}
where we have performed the change of variables $F:=x+G-B$ and defined 
\begin{equation*}
I_U(F):=\E[U(F)]. 
\end{equation*}
Now $I_U$ is a normal concave integrand
in the sense of \cite{R71}. To characterize
$I^*_U$ we use \cite{BF02} Lemma 3.1, derived from \cite{R71} Theorems 1 and 2. It states that 
\begin{equation}
\label{u=w:3}
I^*_U(\nu)=I_{U^*}(\nu_c)-\delta^*_{\D(I_U)}(-\nu_f) \text{
for } \nu\in ba(\mb{P}),
\end{equation}
where $I_{U^*}(\nu_c):=\E[U^*(d\nu_c/d\mb{P})]$ and 
\begin{equation}
\label{u=w:4}
U^*(y):=\inf_{x>0}\{xy-U(x)\}=-\sup_{x>0}\{U(x)-xy\}=-\ti{U}(y).
\end{equation}
Combining (\ref{u=w:1}), (\ref{u=w:2}), (\ref{u=w:3}) and (\ref{u=w:4}) we see
that 
\begin{equation*}
\begin{split}
\min_{\nu\in ba(\mb{P})}&\{\delta^*_{\mc{C}}(\nu)-I^*_{U^{x,B}}(\nu)\}\\
=&\min_{\nu\in ba(\mb{P})\cap(\mc{C})^0}
\left(\E\left[\ti{U}\left(\frac{d\nu_c}{d\mb{P}}\right)\right]-\psi_{\nu}(B)+x\nu(\Omega)+\delta^*_{\D(I_U)}(-\nu_f)\right).
\end{split}
\end{equation*}
Suppose $\nu\in\mc{M}=(\mc{C})^0$, we want to show that $\delta^*_{\D(I_U)}(-\nu_f)=0$. Indeed, as $L^{\infty}_{-}(\mb{P})\subseteq \mc{C}$ we know that $\nu\geq0$. In particular this implies that $\nu_f\geq0$ and so $\psi_{\nu_f}(G)\geq0$ for all $G\in L^{\infty}_{+}(\mb{P})$. Observe that $U(x)=-\infty$ for $x<0$
and so it must be the case that $\D(I_U)\subseteq L^{\infty}_{+}(\mb{P})$. The above discussion allows us to conclude
\begin{equation*}
\psi_{\nu_f}(G)\geq 0 \text{ for all } G\in\D(I_U) \text{ and } \nu\in\mc{M}.
\end{equation*}
As we have $-\psi_{\nu_f}(G)=\psi_{-\nu_f}(G)$ we see that 
\begin{equation*}
\delta^*_{\D(I_U)}(-\nu_f) = \sup_{G\in\D(I_U)}\{\psi_{-\nu_f}(G)\}\leq 0.
\end{equation*}
However, for all $\eps>0$ we have that $\eps\in\D(U)$ and hence for all $\nu\in\mc{M}$,
\begin{equation*}
-\eps\nu_f(\Omega) \leq \delta^*_{\D(I_U)}(-\nu_f) \leq 0.
\end{equation*}
We then conclude that $\delta^*_{\D(I_U)}(-\nu_f)=0$ for all $\nu\in\mc{M}$.
Thus we can write \eqref{chp_Uhalfline_eq:1} as    
\begin{eqnarray*}
u(x)=\min_{\nu\in\mc{M}}
\left(\E\left[\ti{U}\left(\frac{d\nu_c}{d\mb{P}}\right)\right]-\psi_{\nu}(B)+x\nu(\Omega)\right).
\end{eqnarray*}
This completes the proof of items (i) and (ii). 
\end{proof}
\begin{rmk}
\label{BF02rmk} In \cite{BF02} the authors apply a version of the Fenchel
duality theorem from Luenberger \cite{L69} for which it is necessary that the set 
$\D(I_U)\cap\D(\delta_{\mc{C}})$ contains an interior point.
This is nontrivial to check and we therefore choose to use an alternative version of the Fenchel duality theorem, Theorem \ref{chp_Uhalfline_Rockthm1}, and prove the existence of a continuity point directly.  
\end{rmk}

\subsection*{Step II - Existence for the Primal Problem}
Key in the proof of the existence of a primal optimizer and corresponding replicating strategy is the ``dual'' representation of $\mc{R}$ given in Lemma  \ref{InKiffnuK}, the following is the crucial
result.
\begin{lem} 
\label{u=U(R)}
For each $x>\sup_{\nu\in\mc{M}^{\mathrm{sup}}}\psi_{\nu}(B)$ there exists $H^*\in\mc{H}$ such that 
\begin{equation*}
u(x)=\E\big[U\big(x+(H^*\cdot S)_T-B\big)\big]. 
\end{equation*}
\end{lem}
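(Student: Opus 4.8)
The plan is to use the Koml\'os-type compactness technique of Kramkov and Schachermayer \cite{KS03}, adapted to the present nonsmooth, constrained setting, to produce a maximizing sequence whose ``convergent convex combinations'' yield a primal optimizer, and then to verify admissibility of the limit via the dual characterization of $\mc{R}$ in Lemma \ref{InKiffnuK}. First I would take a maximizing sequence $(G_n)_{n\in\mb{N}}$ in $\mc{C}$ (legitimate by Lemma \ref{supK0=supC}), so $\E[U(x+G_n-B)]\to u(x)$; by Theorem \ref{u=w} we know $u(x)<\infty$, and for $x>x_1:=\sup_{\nu\in\mc{M}^{\mathrm{sup}}}\psi_{\nu}(B)$ the argument of Theorem \ref{u=w} (using $R^B\in\mc{C}$) shows $u(x)>-\infty$. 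Passing to forward convex combinations $\ti{G}_n\in\conv(G_n,G_{n+1},\ldots)$, one obtains (e.g.\ via the version of Koml\'os's lemma for $L^0_+$-bounded-below sequences, as in \cite{KS03}) a sequence converging $\mb{P}$-a.s.\ to some $[-\infty,\infty]$-valued random variable $\hat{G}$; since each $\ti{G}_n$ is still in the convex set $\mc{C}\subseteq\mc{R}$ and satisfies $\E[U(x+\ti{G}_n-B)]\to u(x)$ by concavity of $I_{U^{x,B}}$, the limit $\hat{G}$ inherits $\psi_{\nu}(\hat{G})\leq 0$ for all $\nu\in\mc{M}^c$ in the limit (after controlling the negative part), so that $x+\hat{G}$ represents an admissible terminal wealth.

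The key steps, in order: (1) extract the maximizing sequence in $\mc{C}$ and record $-\infty<u(x)<\infty$; (2) apply the convex-combinations/Koml\'os argument to get a.s.\ convergence of $\ti{G}_n\to\hat{G}$ with $\hat{G}$ bounded below in an appropriate sense; (3) use Fatou's lemma together with the asymptotic elasticity Assumption \ref{chp_Uhalfline_AEofU} (via \eqref{dualfinite} and a de la Vall\'ee-Poussin / uniform integrability argument on the positive parts of $U(x+\ti{G}_n-B)$) to show $\E[U(x+\hat{G}-B)]\geq u(x)$, hence equality; (4) show $\hat{G}$ (or rather the random variable $R^*:=\hat{G}$, which a priori may only be in $\mc{R}_0(x)$, not $\mc{C}$) satisfies $\psi_{\nu}(R^*)\leq 0$ for all $\nu\in\mc{M}^c$ --- this follows because $\psi_{\nu}(\ti{G}_n)\leq 0$ for countably additive $\nu$ and Fatou applies to the (bounded-below) positive parts --- so by Lemma \ref{InKiffnuK}, $R^*\in\mc{R}$; (5) conclude there is $H^*\in\mc{H}$ with $R^*\leq(H^*\cdot S)_T$, and since $U$ is increasing, $\E[U(x+(H^*\cdot S)_T-B)]\geq\E[U(x+R^*-B)]=u(x)$, forcing equality.

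The main obstacle I anticipate is step (3): controlling the positive parts $U(x+\ti{G}_n-B)^+$ to rule out loss of mass in the limit, i.e.\ proving that $\big(U(x+\ti{G}_n-B)^+\big)_n$ is uniformly integrable. This is precisely where Assumption \ref{chp_Uhalfline_AEofU} on the asymptotic elasticity of $\ti{U}$ enters: one uses the Fenchel inequality $U(x+\ti{G}_n-B)\leq \ti{U}\big(r\,d\nu^1_c/d\mb{P}\big)+ r\,\frac{d\nu^1_c}{d\mb{P}}(x+\ti{G}_n-B)$ for a fixed $\nu^1\in\mc{M}$ as in Assumption \ref{chp_Uhalfline_FGE}, notes that $\E[\ti{U}(r\,d\nu^1_c/d\mb{P})]<\infty$ by \eqref{dualfinite}, and bounds the remaining linear term using $\psi_{\nu^1}(\ti{G}_n)\leq 0$ together with $\|B\|_{L^\infty(\mb{P})}=b<\infty$; the asymptotic elasticity condition is then needed to upgrade this pointwise domination into genuine uniform integrability via a standard lemma (the nonsmooth analogue of \cite{KS99} Lemma 6.3 / \cite{DPT01}). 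A secondary subtlety is the measurability/bounded-below bookkeeping for $\hat{G}$, since elements of $\mc{R}$ need only be bounded below by a (strategy-dependent) constant, so one must be slightly careful that $R^*=\hat{G}$ lands in $\mc{R}_0(x)$ before invoking Lemma \ref{InKiffnuK}; this is handled by noting $\hat{G}\geq\liminf(-c_n)$ is not automatic, and instead working directly with $\psi_\nu(R^*\wedge k)\le 0$ and the definition of $\mc{M}$.
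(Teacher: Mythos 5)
Your proposal is correct, and its skeleton coincides with the paper's proof: a maximizing sequence, forward convex combinations converging a.s.\ (the paper invokes Delbaen--Schachermayer Lemma A1.1 rather than a Koml\'os lemma, after first passing to a subsequence with $x+R_n-B\geq 0$ so that everything is uniformly bounded below by $-(x+b)$), admissibility of the limit via Fatou and Lemma~\ref{InKiffnuK}, and reverse Fatou once the positive parts $U(x+\ti{G}_n-B)^+$ are shown to be uniformly integrable. The one place where you genuinely diverge is that uniform integrability step. You propose the direct domination
$U(x+\ti{G}_n-B)^+\leq \ti{U}\big(r\,d\nu^1_c/d\mb{P}\big)^+ + r\,\frac{d\nu^1_c}{d\mb{P}}(x+\ti{G}_n-B)$,
with $\E\big[\ti{U}(r\,d\nu^1_c/d\mb{P})\big]<\infty$ for every $r>0$ by \eqref{dualfinite} and the linear term bounded uniformly in $n$ because $\psi_{\nu^1}(\ti{G}_n)\leq 0$ and $\ti{G}_n\geq -(x+b)$; letting $r\downarrow 0$ in the resulting estimate $\sup_n\E[U(x+\ti{G}_n-B)^+\I{A}]\leq \E[\ti{U}(r\,d\nu^1_c/d\mb{P})^+\I{A}]+rK$ already yields uniform integrability, so no further ``standard lemma'' is needed --- the asymptotic elasticity assumption enters only through \eqref{dualfinite}. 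The paper instead runs the contradiction argument of \cite{KS03}: if uniform integrability failed one could find disjoint sets $A_n$ each carrying $\eps$ of positive utility, paste the $R^1_m\I{A_m}$ together into elements $R^2_n-n(x+b)$ of $\mc{R}$, and contradict $\lim_{z\to\infty}u(z)/z=0$, which is itself deduced from the same bound \eqref{dualfinite}. Both routes consume exactly the same hypothesis; yours is shorter given that a dominating $\nu^1$ is assumed to exist, while the paper's isolates the purely primal implication ``$u(z)/z\to 0$ implies uniform integrability,'' which is more robust. Two details to make explicit: the uniform lower bound $\ti{G}_n\geq -(x+b)$ (it follows, as in the paper, from $U=-\infty$ on $(-\infty,0)$ and finiteness of the utilities along the maximizing sequence) is needed both to pass from $\psi_{\nu^1}$ to $\psi_{\nu^1_c}$ and to verify the hypothesis $R^-\in L^\infty(\mb{P})$ of Lemma~\ref{InKiffnuK}; and that lemma requires only this boundedness of $\hat{G}^-$, not membership of $\hat{G}$ in $\mc{R}_0(x)$, so the worry in your closing sentence is unnecessary.
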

\begin{proof}
We know from Lemma \ref{u=w} that $|u(x)|<\infty$ for $x>\sup_{\nu\in\mc{M}^{\mathrm{sup}}}\psi_{\nu}(B)$.
In addition since $B\geq-b$ and $\mc{M}^{\mathrm{sup}}\subset ba_+(\mb{P})$ we have 
\begin{equation*}
x>\sup_{\nu\in\mc{M}^{\mathrm{sup}}}\psi_{\nu}(B)\geq-b,
\end{equation*}
so that $x+b>0$. Now fix $x$ and take a sequence $(R_n)_{n\in\mb{N}}$ with each $R^n\in\mc{R}$ such that
\begin{equation*}
\lim_{n\to\infty}\E[U(x+R_n-B)]=u(x).
\end{equation*}
Since $U(x)=-\infty$ for $x<0$, by passing to a subsequence if necessary, we may assume that 
\begin{equation*}
x+R_n-B\geq 0 \text{ a.s. for all $n$}.
\end{equation*}
In particular the sequence $(R_n)_{n\in\mb{N}}$ is bounded below by a constant,
uniformly in $n$. We may now apply Delbaen and Schachermayer
\cite{DS94} Lemma A1.1 to find, for
each $n\in\mb{N}$, a sequence of convex combinations $(\lambda_{n,m})_{m\geq
n}$ and a random variable $R_*$ such that
\begin{equation*}
R_n^1:=\sum_{m\geq n}\lambda_{n,m}R_m \xrightarrow{} R_* \text{ a.s. }
\end{equation*}
Each $R^1_n\in\mc{R}$ and the
sequence $(R^1_n)_{n\in\mb{N}}$ is bounded from below, uniformly in $n$, and thus, for $\nu\in\mc{M}^c$, 
\begin{equation*}
\psi_{\nu}(R^1_n)=\E\left[\frac{d\nu}{d\mb{P}}R^1_n\right]\leq 0.
\end{equation*}
Applying Fatou's lemma we see that 
\begin{equation*}
\psi_{\nu}(R_*)=\E\left[\frac{d\nu}{d\mb{P}}R_*\right]\leq 0 \, \text{ for all
} \nu\in\mc{M}^c.
\end{equation*}
From Lemma \ref{InKiffnuK} we deduce that $R_*\in\mc{R}$.
Using the concavity of $U$ we have the following,
\begin{equation*}
u(x)\geq \E[U(x+R^1_n-B)]\geq\sum_{m\geq n}\lambda_{n,m}\E[U(x+R_m-B)].
\end{equation*}
This implies that $\big(\E[U(x+R^1_n-B)]\big)_{n\in\mb{N}}$ also converges
to $u(x)$. Exactly as in \cite{KS03} Lemma 1 if we show that 
$\big(U(x+R^1_n-B)^+\big)_{n\in\mb{N}}$ is uniformly integrable the proof will
be complete. For then, by reverse Fatou's lemma and noting $R_*\in\mc{R}$, \begin{equation*}
u(x)=\limsup_{n\to\infty}\E[U(x+R^1_n-B)]\leq \E[U(x+R_*-B)]\leq u(x).
\end{equation*}
As $R_*\in\mc{R}$ we know there exists some $H^*\in\mc{H}$ with 
$(H^*\cdot S)_T\geq R_*$. Since $U$ is increasing this provides 
\begin{equation*}
u(x)=\E[U(x+R_*-B)]\leq\E\big[U\big(x+(H^*\cdot S)_T-B\big)\big] \leq u(x).
\end{equation*}
The statement of the lemma then follows.

Thus we suppose for a contradiction that the uniform integrability fails. Exactly as in \cite{KS03} we may find a sequence of disjoint sets $(A_n)_{n\in\mb{N}}$ contained in $\mc{F}$ and an $\eps>0$ such that, after possibly passing to a subsequence, again indexed by $n$,
\begin{equation}
\label{u=U(R):3}
\E\left[U\big(x+R^1_n-B\big)^+\I{A_n}\right]\geq\eps.
\end{equation}
Using the $\nu^1$ from Assumption \ref{chp_Uhalfline_FGE}, together with \eqref{dualfinite} and Lemma \ref{u=w}
we see that for all $r>0$ and $z>\sup_{\nu\in\mc{M}^{\mathrm{sup}}}\psi_{\nu}(B)$,
\begin{equation}
\label{u=U(R):1}
u(z)=w(z)\leq \E\left[\ti{U}\left(r\frac{d\nu^1_c}{d\mb{P}}\right)+r(b+z)\frac{d\nu^1_c}{d\mb{P}}\right]<\infty.
\end{equation}
For $z$ sufficiently large $u(z)\geq U(z-b)\geq0$ so that combining this
with \eqref{u=U(R):1} we deduce, 
\begin{equation*}
0\geq \limsup_{z\to\infty} \frac{u(z)}{z}\geq\liminf_{z\to\infty}\frac{u(z)}{z}\geq0.
\end{equation*} 
If $U(z)\leq0$ for all $z>0$ then $U^+$ is identically $0$ and the uniform integrability is immediate, hence we may assume there exists $z>0$ such that
$U(z)>0$. 

Define $x_2<\infty$ by  
\begin{equation*}
x_2:=\inf\{z\geq x\,:\,U\big(z-2(b+x)\big)>0\},
\end{equation*}
as well as the sequence $(R^2_n)_{n\in\mb{N}}$ via 
\begin{equation*}
R^2_n:=\sum_{m=1}^{n}R^1_m \I{A_m}.
\end{equation*}
As each $R^1_m\geq-(x+b)$ and $x+b>0$ we have $R^2_n\geq -(x+b)$ for all
$n\in\mb{N}$ and  
\begin{equation*}
R^2_n\leq \sum_{m=1}^{n}\big(R^1_m+(x+b)\big)\I{A_m}\leq \sum_{m=1}^{n}R^1_m+n(x+b).
\end{equation*}
Let $\nu\in\mc{M}^c$, from the above we see that
\begin{equation*}
\psi_{\nu}\big(R^2_n-n(x+b)\big)\leq
\sum_{m=1}^{n}\E\left[\frac{d\nu}{d\mb{P}}R^1_m\right]\leq 0. 
\end{equation*}
Using Lemma \ref{InKiffnuK} we see that for each $n\in\mb{N}$, $R^2_n-n(x+b)\in\mc{R}$.
We claim that, in addition, for each $n\in\mb{N}$,
\begin{equation}
\label{u=U(R):2}
U(x_2+R_n^2-B)\geq\sum_{m=1}^{n}U\big(x+R^1_m-B\big)^+\I{A_m}. 
\end{equation}
Indeed, let us fix $n\in\mb{N}$. If $\omega\notin A_m$ for all $m$ then
the right hand side is $0$ and the left hand side satisfies  
\begin{equation*}
U(x_2+R_n^2-B)\geq U\big(x_2-(x+b)-b\big)\geq 0.
\end{equation*}
If $\omega\in\bigcup_{m=1}^n A_m$ then since the sequence of sets $(A_n)_{n\in\mb{N}}$ is mutually disjoint $\omega\in
A_{m_0}$ for some unique $m_0$ satisfying $1\leq m_0 \leq n$. As $x_2\geq x$ by construction and $U$ is increasing,  
\begin{eqnarray*}
U(x_2+R_n^2-B)=U\big(x_2+R^1_{m_0}-B\big)\geq U\big(x+R^1_{m_0}-B\big).
\end{eqnarray*}
Since $R^1_n\geq-(x+b)$ for all $n$ we see from the definition of $x_2$
that $U\big(x_2+R^1_{m_0}-B\big)\geq 0$. It follows that (\ref{u=U(R):2})
holds.
Recalling that $R^2_n-n(x+b)\in\mc{R}$ as well as using (\ref{u=U(R):3}) and (\ref{u=U(R):2}) we have that
\begin{eqnarray*}
\limsup_{z\to\infty}\frac{u(z)}{z}
&\geq &\limsup_{n\to\infty}\frac{\E\left[U\big(x_2+n(x+b)+R^2_n-n(x+b)-B\big)\right]}{x_2+n(x+b)}\\
&\geq&\limsup_{n\to\infty}\frac{n\eps}{x_2+n(x+b)}=\frac{\eps}{x+b}>0.
\end{eqnarray*}
This is our contradiction and completes the proof of item (iii).
\end{proof}

\subsection*{Step III - The Duality Relations}
\begin{lem}
Let $\nu^*$ be the optimal dual solution 
and $X^*=x+(H^*\cdot S)_T$ be the optimal terminal wealth
as in Lemma
\ref{u=U(R)} then we have
$$
\psi_{\nu^*_f}\big(X^*-B\big)=0,
\quad\psi_{\nu^*}\big(X^*\big)=x\nu^*(\Omega), \quad
X^*-B\in-\partial\ti{U}\left(\frac{d\nu^*_c}{d\mb{P}}\right).
$$
\end{lem}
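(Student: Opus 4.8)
The plan is to exploit the equality $u(x)=w(x)$ from Lemma~\ref{u=w} together with the explicit optimizers $\nu^*$ and $X^*$, and to squeeze a chain of inequalities until it collapses to equalities. Concretely, write $y^*:=\nu^*(\Omega)$ and start from the definition of $\ti{U}$: for a.e.\ $\omega$ and every $x>0$ one has $U(x)\le \ti{U}\!\big(\tfrac{d\nu^*_c}{d\mb{P}}\big)+x\tfrac{d\nu^*_c}{d\mb{P}}$, hence pointwise
\begin{equation*}
U(X^*-B)\le \ti{U}\!\left(\frac{d\nu^*_c}{d\mb{P}}\right)+(X^*-B)\frac{d\nu^*_c}{d\mb{P}}.
\end{equation*}
Integrating, using that $U(X^*-B)^+\in L^1$ and that $\E[\ti{U}(\tfrac{d\nu^*_c}{d\mb{P}})]<\infty$ (this is finiteness of $w(x)$), and recalling $\psi_{\nu^*}=\psi_{\nu^*_c}+\psi_{\nu^*_f}$ with $\psi_{\nu^*_f}(X^*-B)\ge 0$ since $\nu^*_f\ge0$ and $X^*-B\ge0$, gives
\begin{equation*}
u(x)=\E[U(X^*-B)]\le \E\!\left[\ti{U}\!\left(\frac{d\nu^*_c}{d\mb{P}}\right)\right]+\psi_{\nu^*_c}(X^*-B)\le \E\!\left[\ti{U}\!\left(\frac{d\nu^*_c}{d\mb{P}}\right)\right]+\psi_{\nu^*}(X^*-B).
\end{equation*}
Next I would bound $\psi_{\nu^*}(X^*)$: since $X^*-B=x+(H^*\cdot S)_T-B$ with $H^*\cdot S$ a $\nu$-supermartingale for $\nu\in\mc{M}^{\mathrm{sup}}$ — more precisely using Lemma~\ref{InKiffnuK} and the extension of $\psi_\nu$, $\psi_{\nu^*}\big((H^*\cdot S)_T\big)\le 0$, equivalently $\psi_{\nu^*}(X^*)\le x\,y^*$. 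Substituting, the right-hand side is at most $\E[\ti{U}(\tfrac{d\nu^*_c}{d\mb{P}})]-\psi_{\nu^*}(B)+x y^*=w(x)$ by optimality of $\nu^*$. So we obtain $u(x)\le w(x)$ with a string of intermediate inequalities; but $u(x)=w(x)$, so every inequality is an equality.

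Reading off the equalities in order: the step $\psi_{\nu^*}(X^*)\le x y^*$ being tight gives $\psi_{\nu^*}(X^*)=x\nu^*(\Omega)$, the second relation. The step $\psi_{\nu^*_c}(X^*-B)\le \psi_{\nu^*}(X^*-B)$ being tight forces $\psi_{\nu^*_f}(X^*-B)=0$, the first relation; here one uses that both $X^*-B\ge0$ and $\nu^*_f\ge0$ so that the nonnegative quantity $\psi_{\nu^*_f}(X^*-B)$ must vanish (invoking the extension of $\psi_{\nu}$ to random variables bounded below, so $\psi_{\nu^*_f}(X^*-B)$ is well defined as $\lim_n \psi_{\nu^*_f}((X^*-B)\wedge n)\ge 0$). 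Finally the pointwise Fenchel inequality $U(X^*-B)\le \ti{U}(\tfrac{d\nu^*_c}{d\mb{P}})+(X^*-B)\tfrac{d\nu^*_c}{d\mb{P}}$ integrates to an equality, and since the integrand is $\le 0$ pointwise it must vanish a.s.; equality in the Fenchel--Young inequality $U(x)=\ti{U}(y)+xy$ with $y=\tfrac{d\nu^*_c}{d\mb{P}}$ is, by the standard characterization of conjugate subgradients, exactly the statement $X^*-B\in-\partial\ti{U}(\tfrac{d\nu^*_c}{d\mb{P}})$ (equivalently $\tfrac{d\nu^*_c}{d\mb{P}}\in\partial U(X^*-B)$, reading the convex/concave conjugacy appropriately), the third relation.

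I expect the main obstacle to be handling the finitely additive part $\nu^*_f$ rigorously: one must justify that $\psi_{\nu^*_f}(X^*-B)$ and $\psi_{\nu^*}\big((H^*\cdot S)_T\big)$ are well defined via the truncation-limit extension introduced before Assumption~\ref{chp_Uhalfline_FGE}, that the decomposition $\psi_{\nu^*}=\psi_{\nu^*_c}+\psi_{\nu^*_f}$ survives this extension to random variables merely bounded below (a monotone-convergence argument over the truncations $\,\cdot\wedge n$), and that $\E[(X^*-B)\tfrac{d\nu^*_c}{d\mb{P}}]$ is not $+\infty$ so that the integrated Fenchel inequality is not vacuous — for the latter one rearranges to $\,0\le \E[\ti{U}(\tfrac{d\nu^*_c}{d\mb{P}})-U(X^*-B)]<\infty$ using $U(X^*-B)^+\in L^1$ and finiteness of the dual value, which also secures integrability of the product. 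A secondary point is the precise bookkeeping of signs in the subdifferential relation, since $\ti{U}=-U^*$ in the notation of \eqref{u=w:4}; the cleanest route is to note that equality in Fenchel--Young for the pair $(U,\ti{U})$ is symmetric and simply quote the corresponding equivalence $y\in\partial U(x)\iff x\in-\partial\ti{U}(y)$ for concave $U$.
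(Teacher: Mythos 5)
Your proposal is correct and is essentially the paper's own argument: both rest on complementary slackness from $u(x)=w(x)$, the Fenchel--Young inequality, $\psi_{\nu^*}\big((H^*\cdot S)_T\big)\leq 0$ coming from $\nu^*\in\mc{M}$, and the nonnegativity of $\psi_{\nu^*_f}(X^*-B)$. The only difference is presentational -- the paper establishes the three relations by three successive contradiction arguments, while you collapse a single chain of inequalities -- and your extra remarks on integrability and the truncation-limit extension of $\psi_\nu$ are sound.
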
  
\begin{proof}
Since $\D(U)\subset \mb{R}_+$ and $\left|\E\big[U\big(X^*-B\big)\big]\right|<\infty$ we
have $X^*-B\geq 0$. This implies that $\psi_{\nu^*_f}\big(X^*-B\big)\geq0$. 
Now suppose
for a contradiction that this is strict. 
As $\big((H^*\cdot S)_T\big)^-\in L^\infty(\mb{P})$ we have that $(H^*\cdot S)_T\wedge n$ is
in $\mc{C}$ for all $n$. Since $\nu^*\in\mc{M}$ this implies that $\psi_{\nu^*}\big((H^*\cdot S)_T\wedge n\big)\leq0$  and hence $\psi_{\nu^*}\big((H^*\cdot S)_T\big)\leq 0$. Using this together with the conjugate
relations we have,
\begin{eqnarray*}
u(x)&=&\E\big[U\big(X^*-B\big)\big]\\
&\leq&\E\left[\ti{U}\left(\frac{d\nu^*_c}{d\mb{P}}\right)\right]+
\psi_{\nu_c^*}\big(X^*-B\big)\\
&<&\E\left[\ti{U}\left(\frac{d\nu^*_c}{d\mb{P}}\right)\right]-\psi_{\nu^*}(B)+x\nu^*(\Omega)=w(x).
\end{eqnarray*} 
Since we have proved that $u(x)=w(x)$ we have our contradiction.

We know that $\psi_{\nu^*}\big((H^*\cdot S)_T\big)\leq 0$. To show that $\psi_{\nu^*}\big((H^*\cdot S)_T\big)=0$ assume for a contradiction that the inequality is strict. We now have, using $\psi_{\nu^*_f}\big(X^*-B\big)=0$, 
\begin{eqnarray*}
u(x)&\leq&\E\left[\ti{U}\left(\frac{d\nu^*_c}{d\mb{P}}\right)\right]
+\psi_{\nu_c^*}\big(X^*-B\big)\\
&<&\E\left[\ti{U}\left(\frac{d\nu^*_c}{d\mb{P}}\right)\right]-\psi_{\nu^*}(B)+x\nu^*(\Omega)=w(x).
\end{eqnarray*} 
This is again a contradiction. To establish that $X^*-B$ is in the appropriate
subgradient, we use \cite{R70} Theorem 23.5. This states that
for conjugate functions $U$ and $\ti{U}$,
\begin{gather*}
U(x)\leq\ti{U}(y)+xy \text{ for all }y\geq0,  \\
U(x)=\ti{U}(y)+xy \text{ if and only if } x\in-\partial \ti{U}(y).
\end{gather*}
Assume for a contradiction that the set
\begin{equation*}
\Lambda:=\left\{\omega\in\Omega:X^*(\omega)-B(\omega)\notin-\partial \ti{U}\left(\frac{d\nu^*_c(\omega)}
{d\mb{P}}\right)\right\},
\end{equation*}
satisfies $\mb{P}(\Lambda)>0$.
We now have, using $\psi_{\nu^*_f}\big(X^*-B\big)=0$ and $\psi_{\nu^*}\big((H^*\cdot S)_T\big)$=0, 
\begin{eqnarray*}
u(x)&<&\E\left[\ti{U}\left(\frac{d\nu^*_c}{d\mb{P}}\right)\right]
+\psi_{\nu_c^*}\big(X^*-B\big)\\
&=&\E\left[\ti{U}\left(\frac{d\nu^*_c}{d\mb{P}}\right)\right]-\psi_{\nu^*}(B)+x\nu^*(\Omega)=w(x).
\end{eqnarray*}
This is again a contradiction and the proof of the item (iv) is complete.
\end{proof}
\appendix
\bibliography{Max_on_R+}
\bibliographystyle{abbrv}

\end{document}